\newtheorem{theorem}{Theorem}[section]
\newtheorem{remark}{Remark}[section]
\begin{document}
\begin{center}
\textbf{\LARGE{Refinement of Gini-Means Inequalities and Connections with Divergence Measures}}
\end{center}

\bigskip
\begin{center}
\textbf{\large{Inder Jeet Taneja}}\\
Departamento de Matem\'{a}tica\\
Universidade Federal de Santa Catarina\\
88.040-900 Florian\'{o}polis, SC, Brazil.\\
\textit{e-mail: ijtaneja@gmail.com\\
http://www.mtm.ufsc.br/$\sim $taneja}
\end{center}

\begin{abstract}
In 1938, Gini \cite{gin} studied a mean having two parameters. Later, many authors studied properties of this mean. It contains as particular cases the famous means such as \textit{harmonic, geometric, arithmetic}, etc. Also it contains, the \textit{power mean of order r }and \textit{Lehmer mean} as particular cases. In this paper we have considered inequalities arising due to Gini-Mean and Heron's mean, and improved them based on the results recently studied by the author \cite{tan5}.
\end{abstract}

\bigskip
\textbf{Key words:} \textit{Arithmetic mean; Geometric Mean; Harmonic Mean; Gini Mean; Power Mean; Refinement inequalities}

\bigskip
\textbf{AMS Classification:} 94A17; 26A48; 26D07.

\section{Gini Mean of order \textit{r} and \textit{s}}

The Gini \cite{gin} mean of order $r$ and $s$ is given by
\begin{equation}
\label{eq1}
E_{r,s} (a,b) =\begin{cases}
 {\left( {\frac{a^r + b^r}{a^s + b^s}} \right)^{\frac{1}{r - s}}} & {r \ne
s} \\
 {\exp \left( {\frac{a^r\ln a + b^r\ln b}{a^r + b^r}} \right)} & {r = s \ne
0} \\
 {\sqrt {ab} } & {r = s = 0} \\
\end{cases}.
\end{equation}

In particular when $s = 0$ in (\ref{eq1}), we have
\begin{equation}
\label{eq2}
E_{r,0} (a,b): = B_r (a,b) =\begin{cases}
 {\left( {\frac{a^r + b^r}{2}} \right)^{\frac{1}{r}},} & {r \ne 0} \\
 {\sqrt {ab} ,} & {r = 0} \\
\end{cases}
\end{equation}

Again, when $s = r - 1$ in (\ref{eq1}), we have
\begin{equation}
\label{eq3}
E_{r,r - 1} (a,b): = K_s (a,b) = \frac{a^r + b^r}{a^{r - 1} + b^{r - 1}},\;r
\in {\rm R}
\end{equation}

The expression (\ref{eq2}) is famous \textit{as mean of order r} or \textit{power mean}. The expression (\ref{eq3}) is known as \textit{Lehmer mean}\cite{leh}. Both these means are monotonically increasing in $r$. Moreover, these two have the following inequality \cite{che} among each other:
\begin{equation}
\label{eq4}
B_r (a,b)\begin{cases}
 { < L_r (a,b),} & {r > 1} \\
 { > L_r (a,b),} & {r < 1} \\
\end{cases}
\end{equation}

Since $E_{r,s} = E_{s,r} $, the Gini-mean $E_{r,s} (a,b)$ given by (\ref{eq1}) is an increasing function in $r$ or $s$. Using the monotonicity property \cite{czp}, \cite{sim1}, \cite{san} we have the following inequalities:
\begin{align}
& E_{ - 3, - 2} \le E_{ - 2, - 1} \le E_{ - 3 / 2, - 1 / 2} \le E_{ - 1,0} \le
E_{ - 1 / 2,0} \le\notag\\
\label{eq5}
& \hspace{20pt} E_{ - 1 / 2,1 / 2} \le E_{0,1 / 2} \le E_{0,1} \le \left\{ {E_{0,2} \mbox{
or }E_{1 / 2,1} } \right\} \le E_{1,2}.
\end{align}

Let us write the expression (\ref{eq5}) as
\begin{equation}
\label{eq6}
P_1 \le P_2 \le P_3 \le H \le P_4 \le G \le N_1 \le A \le \left( {P_5 \mbox{
or }S} \right) \le P_6 ,
\end{equation}

\noindent
where $P_1 = E_{ - 3, - 2} = K_{ - 2} $, $P_2 = E_{ - 2, - 1} = K_{ - 1}$,
$P_3 = E_{ - 3 / 2, - 1 / 2} = K_{ - 1 / 2} $, $H = E_{ - 1,0} = K_0 = B_{ -
1} $, $P_4 = E_{ - 1 / 2,0} = B_{ - 1 / 2} $, $G = E_{ - 1 / 2,1 / 2} = K_{1
/ 2} = B_0 $, $N_1 = E_{0,1 / 2} = B_{1 / 2} $, $A = E_{0,1} = K_1 = B_1 $,
$P_5 = E_{1 / 2,1} $, $B_2 = E_{0,2} = S$ and $P_6 = E_{1,2} = K_2 $.

\bigskip
The means $H$, $G$, $A$ and $S$ are the \textit{harmonic}, \textit{geometric}, \textit{arithmetic} and the \textit{square-root} means respectively. In \cite{tan2, tan4}, the author studied the following inequalities:
\begin{equation}
\label{eq7}
H \le G \le N_1 \le N_3 \le N_2 \le A \le S,
\end{equation}

\noindent where
\[
N_2 (a,b) = \left( {\frac{\sqrt a + \sqrt b }{2}} \right)\left( {\sqrt
{\frac{a + b}{2}} } \right)
\]
\noindent and
\[
N_3 (a,b) = \frac{a + \sqrt {ab} + b}{3}.
\]

The expression $N_3 (a,b)$ is famous as Heron's mean. Some applications of the inequalities (\ref{eq7}) can be seen in \cite{sim2}, \cite{szd}. Combining (\ref{eq6}) and (\ref{eq7}), we have the following sequence of inequalities:
\begin{equation}
\label{eq8}
P_1 \le P_2 \le P_3 \le H \le P_4 \le G \le N_1 \le N_3 \le N_2 \le A \le
\left\{ {P_5 \mbox{ or }S} \right\} \le P_6 .
\end{equation}

The expression (\ref{eq8}) admits many non-negative differences. Let us write them
as follows:
\begin{equation}
\label{eq9}
D_{tp} (a,b) = bg_{tp} \left( {\frac{a}{b}} \right) = b\left[ {f_t \left(
{\frac{a}{b}} \right) - f_p \left( {\frac{a}{b}} \right)} \right],
\end{equation}

\noindent where
\[
g_{tp} (x) = f_t (x) - f_p (x),
\,
f_t (x) \ge f_p (x),
\,
\forall x > 0.
\]

More precisely, the function $f:(0,\infty ) \to {\rm R}$ appearing in (\ref{eq9})
lead us to the following inequalities:
\begin{align}
& f_{P_1 } (x) \le f_{P_2 } (x) \le f_{P_3 } (x) \le f_H (x) \le f_{P_4 } (x)
\le f_G (x) \le f_{N_1 } (x) \le \notag\\
& \hspace{20pt} \le f_{N_3 } (x) \le f_{N_2 } (x) \le f_A (x) \le \left\{ {f_{P_5 }
(x)\mbox{ or }f_S (x)} \right\} \le f_{P_6 } (x).\notag
\end{align}

Equivalently, we have
\begin{align}
&\frac{x(x^2 + 1)}{x^3 + 1} \le \frac{x(x + 1)}{x^2 + 1} \le \frac{x\left(
{\sqrt x + 1} \right)}{x^{3 / 2} + 1} \le \frac{2x}{1 + x} \le
\frac{4x}{\left( {\sqrt x + 1} \right)^2} \le \notag\\
& \hspace{20pt} \le \sqrt x \le \left( {\frac{\sqrt x + 1}{2}} \right)^2 \le \frac{x +
\sqrt x + 1}{3} \le \left( {\frac{\sqrt x + 1}{2}} \right)\left( {\sqrt
{\frac{x + 1}{2}} } \right) \le\notag\\
\label{eq10}
& \hspace{30pt} \le \frac{x + 1}{2} \le \left\{ {\left( {\frac{x + 1}{\sqrt x + 1}}
\right)^2\mbox{ or }\sqrt {\frac{x^2 + 1}{2}} } \right\} \le \frac{x^2 +
1}{x + 1}.
\end{align}

Based on the differences arising due to inequalities (\ref{eq8}) written according
to (\ref{eq9}), with the property that the functions are convex, the author
\cite{tan5} proved the following sequences of inequalities:
\begin{align}
& \frac{1}{8}D_{P_6 P_1 } \le \frac{1}{6}D_{P_6 P_2 } \le D_{SA} \le
\frac{1}{3}D_{SH} \le \frac{1}{2}D_{AH} \le\notag\\
& \hspace{15pt} \le \left\{ {\begin{array}{l}
 \textstyle{4 \over 9}D_{P_6 N_2 } \\
 \left\{ {\begin{array}{l}
 \textstyle{3 \over 7}D_{P_6 N_3 } \\
 \textstyle{2 \over 5}D_{SP_4 } \\
 \end{array}} \right\} \le \left\{ {\begin{array}{l}
 \textstyle{2 \over 5}D_{P_6 N_1 } \\
 \textstyle{2 \over 7}D_{P_6 P_4 } \\
 \end{array}} \right. \\
 \end{array}} \right\} \le \frac{1}{3}D_{P_6 G} \le \left\{
{\begin{array}{l}
 \textstyle{2 \over 5}D_{P_5 H} \\
 \textstyle{2 \over 3}D_{AP_4 } \\
 \end{array}} \right\} \le 4D_{N_2 N_1 } \le \notag\\
& \hspace{30pt} \le \frac{4}{3}D_{N_2 G} \le D_{AG} \le 4D_{AN_2 } \le \frac{2}{3}D_{P_5 G}
\le D_{P_5 N_1 } \le \notag\\
\label{eq11}
& \hspace{45pt} \le \frac{6}{5}D_{P_5 N_3 } \le \frac{4}{3}D_{P_5 N_2 } \le 2D_{P_5 A},\\\notag\\
\label{eq12}
& D_{SA} \le \left\{ {\begin{array}{l}
 \textstyle{4 \over 5}D_{SN_2 } \\
 \textstyle{3 \over 4}D_{SN_3 } \\
 \end{array}} \right\} \le \frac{2}{3}D_{SN_1 } \le \left\{
{\begin{array}{l}
 \textstyle{1 \over 3}D_{P_6 G} \\
 \textstyle{1 \over 2}D_{SG} \\
 \end{array}} \right\} \le \frac{2}{5}D_{P_5 H},\\
\intertext{and}
\label{eq13}
& \left\{ {\begin{array}{l}
 \textstyle{1 \over 8}D_{P_6 P_1 } \\
 \textstyle{2 \over {13}}D_{P_5 P_1 } \\
 \end{array}} \right\} \le \left\{ {\begin{array}{l}
 \textstyle{1 \over 6}D_{P_6 P_2 } \\
 \textstyle{2 \over 9}D_{P_5 P_2 } \\
 \end{array}} \right\} \le \frac{2}{7}D_{P_5 P_3 } \le \frac{4}{9}D_{P_6 N_2
} \le D_{P_6 S} \le D_{AG}.
\end{align}

\noindent
where, for example, $D_{P_6 N_2 } = P_6 - N_2 $, $D_{AG} = A - G$, $D_{P_5 A} = P_5 - A$, etc.

\bigskip
\textbf{Notation:} \textit{Throughout the paper, the notation }$A \le \left\{ {\begin{array}{l}
 B \\
 C \\
 \end{array}} \right\}$\textit{ is understood as }$A \le B$\textit{ and }$A \le C$\textit{, but there is no relation between }$B$\textit{ and }$C.$

\bigskip
The aim of this paper is to improve the inequalities (\ref{eq8}) based on the
results appearing in the inequalities (\ref{eq11})-(\ref{eq13}).

\section{Refinement Inequalities}

The results appearing in the inequalities (\ref{eq11})-(\ref{eq13}) lead us to the
following two groups of individual inequalities:

\bigskip
\textbf{Group 1:}
\begin{multicols}{3}
\begin{align}
& 1. \hspace{15pt} P_2 \le \frac{P_6 + 3P_1 }{4}.\notag\\
& 2. \hspace{15pt} \frac{2S + H}{3} \le A.\notag\\
& 3. \hspace{15pt} \frac{P_6 + 14N_1 }{15} \le N_3.\notag\\
& 4. \hspace{15pt} \frac{P_6 + 2P_4 }{3} \le N_3.\notag\\
& 5. \hspace{15pt} S \le \frac{5P_6 + 2P_4 }{7}.\notag\\
& 6. \hspace{15pt} \frac{P_6 + 3G}{4} \le N_2.\notag\\
& 7. \hspace{15pt} \frac{P_6 + 5G}{6} \le N_1.\notag
\end{align}

\begin{align}
& 8. \hspace{15pt} G \le \frac{P_6 + 6P_4 }{7}.\notag\\
& 9. \hspace{15pt} \frac{2N_2 + G}{3} \le N_1.\notag\\
& 10. \hspace{15pt} N_2 \le \frac{3A + G}{4}.\notag\\
& 11. \hspace{15pt} N_1 \le \frac{P_5 + 2G}{3}.\notag\\
& 12. \hspace{15pt} N_3 \le \frac{P_5 + 5N_1 }{6}.\notag\\
& 13. \hspace{15pt} N_2 \le \frac{P_5 + 9N_3 }{10}.\notag\\
& 14. \hspace{15pt} A \le \frac{P_5 + 2N_2 }{3}. \notag
\end{align}
\begin{align}
& 15. \hspace{15pt} \frac{S + 4N_2 }{5} \le A.\notag\\
& 16. \hspace{15pt} \frac{S + 3N_3 }{4} \le A.\notag\\
& 17. \hspace{15pt} \frac{S + 5N_1 }{6} \le N_2.\notag\\
& 18. \hspace{15pt} \frac{S + 8N_1 }{9} \le N_3.\notag\\
& 19. \hspace{15pt} \frac{S + 3G}{4} \le N_1. \notag\\
& 20. \hspace{15pt} P_2 \le \frac{9P_1 + 4P_5 }{13}.\notag\\
& 21. \hspace{15pt} P_3 \le \frac{2P_5 + 7P_2 }{9}.\notag\\
& 22. \hspace{15pt} S \le \frac{5P_6 + 4N_2 }{9}.\notag
\end{align}
\end{multicols}

\begin{multicols}{2}
\textbf{Group 2:}
\begin{align}
& 1. \hspace{15pt} 6A + P_6 \le 6S + P_2.\notag\\
& 2. \hspace{15pt} 9A + 8N_2 \le 9H + 8P_6.\notag\\
& 3. \hspace{15pt} 7A + 6N_3 \le 7H + 6P_6.\notag\\
& 4. \hspace{15pt} 5A + 4P_4 \le 5H + 4S.\notag\\
& 6. \hspace{15pt} S + N_1 \le P_4 + P_6.\notag\\
& 6. \hspace{15pt} 6H + 5P_6 \le 6P_5 + 5G.\notag\\
& 7. \hspace{15pt} 2P_4 + P_6 \le 2A + G.\notag\\
& 8. \hspace{15pt} 10N_1 + P_5 \le 10N_2 + H.\notag
\end{align}

\begin{align}
& 9. \hspace{15pt} A + 6N_1 \le P_4 + 6N_2.\notag\\
& 10. \hspace{15pt} G + 6A \le P_5 + 6N_2\notag\\
& 11 \hspace{15pt} G + 2S \le P_6 + 2N_1.\notag\\
& 12. \hspace{15pt} 4H + 5S \le 4P_5 + 5G.\notag\\
& 13. \hspace{15pt} 16P_2 + 9P_6 \le 16P_5 + 9P_1.\notag\\
& 14. \hspace{15pt} 13P_2 + 12P_5 \le 13P_6 + 12P_1.\notag\\
& 15. \hspace{15pt} 12P_3 + 7P_6 \le 12P_5 + 7P_2.\notag\\
& 16. \hspace{15pt} 14N_2 + 9P_5 \le 14P_6 + 9P_3.\notag\\
& 17. \hspace{15pt} P_6 + G \le A + S.\notag
\end{align}
\end{multicols}

Based on the inequalities appearing in Group 1, we have the theorem giving
the refinement of the inequalities appearing in (\ref{eq8}).

\begin{theorem} The following inequalities hold:
\begin{align}
& G \le \textstyle{{P_6 + 6P_4 } \over 7} \le \left\{ {\begin{array}{l}
 \textstyle{{P_6 + 5G} \over 6} \\
 \textstyle{{S + 3G} \over 4} \\
 \end{array}} \right\} \le \textstyle{{2N_2 + G} \over 3} \le N_1 \le
\left\{ {\begin{array}{l}
 \textstyle{{10N_2 + H - P_5 } \over {10}} \\
 \textstyle{{P_4 + 6N_2 - A} \over 6} \\
 \end{array}} \right\} \le\notag\\
& \hspace{10pt} \le \left\{ {\begin{array}{l}
 \left\{ {\begin{array}{l}
 \textstyle{{P_5 + 2G} \over 3} \\
 P_4 + P_6 - S \\
 \end{array}} \right\} \le \textstyle{{P_6 + 2P_4 } \over 3} \\
 P_4 + P_6 - S\left\{ {\begin{array}{l}
 \textstyle{{P_6 + 14N_1 } \over {15}} \\
 \textstyle{{S + 8N_1 } \over 9} \\
 \end{array}} \right. \\
 \end{array}} \right\} \le N_3 \le\notag\\
& \hspace{20pt} \le N_2 \le \left\{ {\begin{array}{l}
 \textstyle{{3A + G} \over 4} \le \textstyle{{P_5 + 9N_3 } \over {10}} \le
\left\{ {\begin{array}{l}
 \textstyle{{8P_6 + 9H - 9A} \over 8} \\
 \textstyle{{S + 4N_2 } \over 5} \\
 \textstyle{{S + 3N_3 } \over 4} \\
 \textstyle{{2S + H} \over 3} \\
 \end{array}} \right\} \le A \\
 \textstyle{{14P_6 + 9P_3 - 9P_5 } \over {14}} \\
 \end{array}} \right\} \le \notag\\
& \hspace{30pt} \le \left\{ {\begin{array}{l}
 \textstyle{{5H + 4S - 4P_4 } \over 5} \\
 \textstyle{{P_5 + 6N_2 - G} \over 6} \le \textstyle{{P_5 + 2N_2 } \over 3}
\le \textstyle{{7H + 6P_6 - 6N_3 } \over 7} \\
 \end{array}} \right\} \le \notag\\
 & \hspace{45pt} \le \left\{ {\begin{array}{l}
 \textstyle{{4H + 5S - 5G} \over 4} \le P_5 \\
 S \le \textstyle{{P_6 + 2N_1 - G} \over 2} \le \left\{ {\begin{array}{l}
 \textstyle{{2P_4 + 5P_6 } \over 7} \\
 \textstyle{{4N_2 + 5P_6 } \over 9} \\
 \end{array}} \right. \\
 \textstyle{{12P_5 + 13P_2 - 12P_1 } \over {13}} \\
 \end{array}} \right\} \le P_6 \le \notag\\
 \label{eq14}
& \hspace{60pt} \le \left\{ {\begin{array}{l}
 A + S - G \\
 \textstyle{{6P_5 + 5G - 6H} \over 5} \\
 2A + G - 2P_4 \\
 \end{array}} \right\} \le \left\{ {\begin{array}{l}
 \textstyle{{12P_5 + 7P_2 - 12P_3 } \over 7} \\
 6S + P_2 - 6A \\
 \end{array}} \right\} \le \textstyle{{16P_5 + 9P_1 - 16P_2 } \over 9}
 \end{align}
\noindent and
\begin{equation}
\label{eq15}
P_2 \le \left\{ {\begin{array}{l}
 \textstyle{{P_6 + 3P_1 } \over 4} \le N_1 \le \textstyle{{P_6 + 3G} \over
4} \le \textstyle{{S + 5N_1 } \over 6} \le N_2 \\
 \textstyle{{4P_5 + 9P_1 } \over {13}} \le N_3 \le \textstyle{{P_5 + 5N_1 }
\over 6} \le A \le \textstyle{{3I + 2P_4 } \over 2} \le \left\{
{\begin{array}{l}
 P_5 \le \textstyle{{T + 2A} \over 2} \le \textstyle{{3J + 16G} \over {16}}
\\
 S \\
 \end{array}} \right. \\
 P_3 \le \textstyle{{2P_5 + 7P_2 } \over 9} \le N_1 \\
 \end{array}} \right..
\end{equation}
\end{theorem}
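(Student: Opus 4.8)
The plan is to treat the two long chains (\ref{eq14}) and (\ref{eq15}) as assembled entirely from the atomic inequalities already listed in Group~1 and Group~2, together with the base ordering (\ref{eq8}), so that no new analytic input is needed beyond what produced (\ref{eq11})--(\ref{eq13}). First I would record the reduction that underlies every atom: each mean in (\ref{eq1})--(\ref{eq3}) and each of $H,G,N_1,N_2,N_3,A,S$ is homogeneous of degree one, so setting $x=a/b$ and dividing by $b$ turns every asserted inequality into a one-variable statement about the functions $f_\bullet(x)$ of (\ref{eq10}); equivalently, each Group inequality is the ``unpacked'' form of one adjacent comparison $c_iD_i\le c_jD_j$ in (\ref{eq11})--(\ref{eq13}). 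For instance, the first link of (\ref{eq11}), namely $\tfrac{1}{8}D_{P_6P_1}\le\tfrac{1}{6}D_{P_6P_2}$, becomes after writing $D_{P_6P_1}=P_6-P_1$, $D_{P_6P_2}=P_6-P_2$ and clearing denominators the statement $3(P_6-P_1)\le4(P_6-P_2)$, i.e. $P_2\le\tfrac{P_6+3P_1}{4}$, which is exactly item~1 of Group~1. I would verify once that this unpacking is purely algebraic and reversible, so that all the Group~1 and Group~2 inequalities are available as established facts.

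Then comes the core of the argument, which is combinatorial rather than analytic: proving (\ref{eq14}) and (\ref{eq15}) link by link. Each ``$X\le Y$'' in the chains is of one of two types. Many are a single Group item read directly: e.g. the opening $G\le\tfrac{P_6+6P_4}{7}$ is Group~1, item~8; the fork $N_1\le\bigl\{\tfrac{10N_2+H-P_5}{10},\ \tfrac{P_4+6N_2-A}{6}\bigr\}$ is the pair of Group~2 items~8 and~9 rearranged ($10N_1+P_5\le10N_2+H$ and $A+6N_1\le P_4+6N_2$); and $\tfrac{P_6+3G}{4}\le N_2$ (Group~1, item~6) supplies the step $\tfrac{P_6+5G}{6}\le\tfrac{2N_2+G}{3}$ after clearing the common denominator. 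The remaining links are short linear combinations of one atom with the base ordering (\ref{eq8}). A representative case is $\tfrac{P_6+6P_4}{7}\le\tfrac{P_6+5G}{6}$: using $7G\le P_6+6P_4$ (item~8) and $P_4\le G$ (from (\ref{eq8})) one gets $P_6+35G\ge(7G-6P_4)+35G=42G-6P_4\ge36P_4$, which is the claimed inequality. I would organize this as a table mapping every link of (\ref{eq14}) and (\ref{eq15}) to the atom(s) and base relations it uses, and check transitivity along each strand.

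Particular care is needed for the branching notation $A\le\bigl\{\,B;\,C\,\bigr\}$: for every fork I must prove both branches separately, and at every point where two branches merge back I must check that both upper expressions dominate the common successor, so that the diagram is genuinely a valid partial order and not merely a formal juxtaposition. One additional preliminary is required for (\ref{eq15}) alone: it invokes the means $I$, $T$ and $J$, which are not among those introduced in (\ref{eq1})--(\ref{eq7}); I would first supply their definitions and the auxiliary one-variable inequalities relating them to $A$, $P_4$, $P_5$ and $G$ (proved by the same reduction to a nonnegative function of $x$, and read as convex combinations), after which (\ref{eq15}) assembles exactly as (\ref{eq14}) does.

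The main obstacle I expect is not any single estimate but the bookkeeping: with dozens of links, several nested forks, and the stacked-brace layout, the real risk is a mislabelled or omitted link, or a fork whose two branches are inadvertently merged without both dominations being checked. A secondary difficulty is the handful of ``combined'' links, where one must pick the correct atom and the correct base relation from (\ref{eq8}) and combine them in the right direction; choosing the wrong base inequality (for example using $G\le N_1$ where $P_4\le G$ is what is needed) breaks the estimate even though every ingredient is individually true. Once the mapping table is complete and the forks are audited, the theorem follows by transitivity with no further computation.
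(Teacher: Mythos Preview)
Your plan contains a real gap. You assert that every link in (\ref{eq14}) and (\ref{eq15}) is either a single Group~1 or Group~2 atom, or a short linear combination of such an atom with the base ordering (\ref{eq8}), so that ``no new analytic input is needed beyond what produced (\ref{eq11})--(\ref{eq13}).'' This is not the case. The paper's own proof begins exactly where yours does---by noting that some links are immediate from Groups~1 and~2 or from (\ref{eq8})---but then proceeds to prove \emph{forty-two} additional one-variable inequalities, each by a fresh computation (writing the difference as $b\,g(a/b)$, squaring, factoring out $(\sqrt{x}-1)^2$ or $(\sqrt{x}-1)^4$, and in several cases analysing the roots of polynomials of degree up to~$20$). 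These are the links that do \emph{not} reduce to the atoms you have available.

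A concrete instance: the fork immediately after $G\le\tfrac{P_6+6P_4}{7}$ requires $\tfrac{P_6+6P_4}{7}\le\tfrac{S+3G}{4}$, i.e. $4P_6+24P_4\le 7S+21G$. The only Group items touching $P_6,P_4,S,G$ simultaneously (items~5, 8, 19 of Group~1 and item~17 of Group~2) give bounds in the wrong direction or are too weak; combining them with $P_4\le G\le A\le S\le P_6$ from (\ref{eq8}) does not close the gap. The paper proves this link directly by squaring. Similarly, the transitions of the form $\tfrac{10N_2+H-P_5}{10}\le P_4+P_6-S$ and $\tfrac{14P_6+9P_3-9P_5}{14}\le\tfrac{5H+4S-4P_4}{5}$ involve five or six means at once and are established in the paper via lengthy polynomial computations; they are not rearrangements of any adjacent pair in (\ref{eq11})--(\ref{eq13}). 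Your representative ``combined'' example, $\tfrac{P_6+6P_4}{7}\le\tfrac{P_6+5G}{6}$, does work from $P_4\le G$ alone, but that is one of the easy links the paper also treats as obvious; the hard ones are the forty-two that remain, and your bookkeeping table cannot fill them without new analysis.
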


\begin{proof} Some of the results appearing in (\ref{eq14}) and (\ref{eq15}) are either
due to Groups 1 and 2 or are obvious. Here we shall prove only those are not
obvious.

\begin{enumerate}
\item \textbf{For }$\bf{\frac{P_6 + 6P_4 }{7} \le \frac{S + 3G}{4}}$\textbf{: }We have
to show that
\[
\textstyle{1 \over {28}}\left( {7S + 21G - 4P_6 - 24P_4 } \right) \ge 0.
\]

We can write $7S + 21G - 4P_6 - 24P_4 = \textstyle{1 \over 2}b\,g_1 \left(
{a \mathord{\left/ {\vphantom {a b}} \right. \kern-\nulldelimiterspace} b}
\right)$, where
\[
g_1 (x) = \frac{u_1 (x)}{\left( {x + 1} \right)\left( {\sqrt x + 1}
\right)^2},
\]

\noindent with
\begin{align}
u_1 (x) & = 7\sqrt {2x^2 + 2} \left( {\sqrt x + 1} \right)^2\left( {x + 1}
\right)\notag\\
& \hspace{20pt} - \left( {8x^3 - 26x^{5 / 2} + 116x^2 - 84x^{3 / 2} + 116x - 26\sqrt x + 8}
\right).\notag
\end{align}

Now we shall show that $u_1 (x) \ge 0$, $\forall x > 0$. Let us consider
\begin{align}
v_1 (x) & = \left[ {7\sqrt {2x^2 + 2} \left( {\sqrt x + 1} \right)^2\left( {x
+ 1} \right)} \right]^2\notag\\
& \hspace{20pt} - \left( {8x^3 - 26x^{5 / 2} + 116x^2 - 84x^{3 / 2} + 116x - 26\sqrt x + 8}
\right)^2\notag\\
& = 2\left( {\sqrt x - 1} \right)^2\left( {\begin{array}{l}
 8\sqrt x \left( {x^3 + 92x^2 + 92x + 1} \right)\left( {\sqrt x - 1}
\right)^2 + \\
 17x^5 + 430x^{9 / 2} + x^4 + 3064x^{7 / 2} + x + \\
 + 5522x^{5 / 2} + 3064x^{3 / 2} + 430\sqrt x + 17 \\
 \end{array}} \right).\notag
\end{align}

Since $v_1 (x) \ge 0$ giving $u_1 (x) \ge 0$, $\forall x > 0$, hence proving the required result.

\bigskip
\textbf{Argument:} \textit{Let }$a$\textit{ and }$b$\textit{ two positive numbers, i.e., }$a > 0$\textit{ and }$b > 0$\textit{. If }$a^2 - b^2 \ge 0$\textit{, then we can conclude that }$a \ge b$\textit{ because }$a - b = ({a^2 - b^2)} \mathord{\left/ {\vphantom {{a^2 - b^2)} {(a + b)}}} \right. \kern-\nulldelimiterspace} {(a + b)}$\textit{. If }$b < 0$\textit{, then obviously, }$a - ( - b) = a + b > 0$\textit{ always holds. In order to apply the above argument, it is sufficient that }$a >
0$\textit{. We have used this argument to prove }$u_1 (x) \ge 0, \forall x > 0$\textit{. We shall use frequently this argument to prove the other parts. }

\bigskip
\item \textbf{For }$\bf{\frac{S + 3G}{4} \le \frac{2N_2 + G}{3}}$\textbf{: }We have to show that
\[
\textstyle{1 \over {12}}\left( {8N_2 - 5G - 3S} \right) \ge 0.
\]

We can write $8N_2 - 5G - 3S = \textstyle{1 \over 2}b\,g_2 \left( {a \mathord{\left/ {\vphantom {a b}} \right. \kern-\nulldelimiterspace} b} \right)$, where
\[
g_2 (x) = 4\left( {\sqrt x + 1} \right)\sqrt {2x + 2} - \left( {10\sqrt x +
3\sqrt {2x^2 + 2} } \right).
\]

Now we shall show that $g_2 (x) \ge 0$, $\forall x > 0$. In order to prove it we shall apply twice the argument given in part 2 of section 3. Let us consider
\begin{align}
v_2 (x) & = \left[ {4\left( {\sqrt x + 1} \right)\sqrt {2x + 2} } \right]^2 -
\left( {10\sqrt x + 3\sqrt {2x^2 + 2} } \right)^2\notag\\
& = 14x^2 + 46x^{3 / 2} + 18\sqrt x \left( {\sqrt x - 1} \right)^2 + 46\sqrt
x + 14\notag\\
& \hspace{20pt} - 60\sqrt x \sqrt {2x^2 + 2}.\notag
\end{align}

Let us consider again
\begin{align}
v_{2a} (x) & = \left[ {14x^2 + 46x^{3 / 2} + 18\sqrt x \left( {\sqrt x - 1}
\right)^2 + 46\sqrt x + 14} \right]^2\notag\\
& \hspace{20pt} - \left[ {60\sqrt x \sqrt {2x^2 + 2} } \right]^2.\notag\\
& = 4\left( {\sqrt x - 1} \right)^4\left( {49x^2 + 644x^{3 / 2} + 1254x +
644\sqrt x + 49} \right).\notag
\end{align}

Since $v_{2a} (x) \ge 0$, giving $v_2 (x) \ge 0$, $\forall x > 0$. This implies that $g_2 (x) \ge 0$, $\forall x > 0$, hence proving the required result.

\bigskip
\item \textbf{For }$\bf{\frac{P_6 + 5G}{6} \le \frac{2N_2 + G}{3}}$\textbf{: }We have
to show that
\[
\textstyle{1 \over 6}\left( {4N_2 - 3G - P_6 } \right) \ge 0.
\]

We can write $4N_2 - 3G - P_6 = b\,g_3 \left( {a \mathord{\left/ {\vphantom {a b}} \right. \kern-\nulldelimiterspace} b} \right)$, where
\[
g_3 (x) = \frac{u_3 (x)}{x + 1},
\]

\noindent with
\[
u_3 (x) = \sqrt {2x + 2} \left( {\sqrt x + 1} \right)\left( {x + 1} \right)
- \left( {3x^{3 / 2} + 3\sqrt x + x^2 + 1} \right).
\]

Now we shall show that $u_3 (x) \ge 0$, $\forall x > 0$. Let us consider
\begin{align}
v_3 (x) & = \left[ {\sqrt {2x + 2} \left( {\sqrt x + 1} \right)\left( {x + 1}
\right)} \right]^2 - \left( {3x^{3 / 2} + 3\sqrt x + x^2 + 1} \right)^2\notag\\
& = \left( {\sqrt x - 1} \right)^2\left( {x^2 + 2x^{3 / 2} + 2\sqrt x + x +
1} \right).\notag
\end{align}

Since $v_3 (x) \ge 0$, giving $u_3 (x) \ge 0$, $\forall x > 0$, hence proving the required result.

\bigskip
\item \textbf{For }$\bf{\frac{10N_2 + H - P_5 }{10} \le P_4 + P_6 - S}$\textbf{: }We
have to show that
\[
\textstyle{1 \over {10}}\left( {10P_4 + 10P_6 \left. { + P_5 - 10N_2 - 10S -
H} \right) \ge 0} \right..
\]

We can write $10P_4 + 10P_6 + P_5 - 10N_2  \,  - 10S - H = \textstyle{1 \over
2}b\,g_4 \left( {a \mathord{\left/ {\vphantom {a b}} \right.
\kern-\nulldelimiterspace} b} \right)$, where
\[
g_4 (x) = \frac{u_4 (x)}{\left( {\sqrt x + 1} \right)^2\left( {x + 1}
\right)},
\]

\noindent with
\begin{align}
u_4 (x) & = 22x^3 + 40x^{5 / 2} + 98x^2 + 4x\left( {\sqrt x - 1} \right)^2 +
98x + 40\sqrt x + 22\notag\\
& \hspace{20pt} - 5\left( {\sqrt x + 1} \right)^2\left( {x + 1} \right)\left[ {2\sqrt {2x^2
+ 2} + 5\left( {\sqrt x + 1} \right)\sqrt {2x + 2} } \right].\notag
\end{align}

Now we shall show that $u_4 (x) \ge 0$, $\forall x > 0$. In order to prove
it we shall apply twice the argument given in Part 1. Let us consider
\begin{align}
v_4 (x) & = \left[ {22x^3 + 40x^{5 / 2} + 98x^2 + 4x\left( {\sqrt x - 1}
\right)^2 + 98x + 40\sqrt x + 22} \right]^2\notag\\
& \hspace{20pt} - \left\{ {5\left( {\sqrt x + 1} \right)^2\left( {x + 1} \right)\left[
{2\sqrt {2x^2 + 2} + 5\left( {\sqrt x + 1} \right)\sqrt {2x + 2} } \right]}
\right\}^2\notag\\
& = \left( {\begin{array}{l}
 234 + 17240x^3 + 888x^{5 / 2} + 8102x^2 + 3508x^{3 / 2} + \\
 + 3588x + 660\sqrt x + 234x^6 + 660x^{11 / 2} + \\
 + 3588x^5 + 3508x^{9 / 2} + 8102x^4 + 888x^{7 / 2} \\
 \end{array}} \right)\notag\\
& \hspace{20pt} - 100\sqrt {2x^2 + 2} \sqrt {2x + 2} \left( {\begin{array}{l}
 26x^2 + 12x^{7 / 2} + 5x^4 + \\
 + 26x^{5 / 2} + + 12x + 20x^3 + \\
 + 20x^{3 / 2} + x^{9 / 2} + 5\sqrt x + 1 \\
 \end{array}} \right).\notag
\end{align}

Let us consider again
\begin{align}
v_{4a} (x) & = \left( {\begin{array}{l}
 234 + 17240x^3 + 888x^{5 / 2} + 8102x^2 + 3508x^{3 / 2} + \\
 + 3588x + 660\sqrt x + 234x^6 + 660x^{11 / 2} + \\
 + 3588x^5 + 3508x^{9 / 2} + 8102x^4 + 888x^{7 / 2} \\
 \end{array}} \right)^2\notag\\
& \hspace{20pt} - \left[ {100\sqrt {2x^2 + 2} \sqrt {2x + 2} \left( {\begin{array}{l}
 26x^2 + 12x^{7 / 2} + 5x^4 + \\
 + 26x^{5 / 2} + + 12x + 20x^3 + \\
 + 20x^{3 / 2} + x^{9 / 2} + 5\sqrt x + 1 \\
 \end{array}} \right)} \right]^2\notag\\
& = 4\left( {\sqrt x - 1} \right)^4w_4 (x),\notag
\end{align}

\noindent where
\[
w_4 (x) = \left( {\begin{array}{l}
 3689 - 144760x^{3 / 2} - 8024x^{19 / 2} - 8024\sqrt x + \\
 + 402389x^2 + 6593432x^3 + 3689x^{10} - 25534x^9 + \\
 + 402389x^8 + 6593432x^7 + 17426946x^6 + \\
 + 26278348x^5 + 17426946x^4 + 1834912x^{15 / 2} + \\
 + 10215648x^{13 / 2} + 18146128x^{11 / 2} + 18146128x^{9 / 2} + \\
 + 10215648x^{7 / 2} - 25534x + 1834912x^{5 / 2} - 144760x^{17 / 2} \\
 \end{array}} \right).
\]

Now we shall show that $w_4 (x) > 0$, $\forall x > 0$. Let us consider
\[
h_4 (t) = w_4 (t^2) = \left( {\begin{array}{l}
 3689t^{20} - 8024t^{19} - 25534t^{18} - 144760t^{17} + \\
 + 402389t^{16} + 1834912t^{15} + 6593432t^{14} + \\
 + 10215648t^{13} + 17426946t^{12} + 18146128t^{11} + \\
 + 26278348t^{10} + 18146128t^9 + 17426946t^8 + \\
 + 10215648t^7 + 6593432t^6 + 1834912t^5 + \\
 + 402389t^4 - 144760t^3 - 25534t^2 - 8024t + 3689 \\
 \end{array}} \right).
\]

The polynomial equation $h_4 (t) = 0$ of 20$^{th}$ degree admits 20 solutions. All of them are convex (not written here). This means that there are no real positive solutions of the equation $h_4 (t) = 0$. Thus we conclude that either $h_4 (t) > 0$ or $h_4 (t) < 0$, for all $t > 0$. In order to check it is sufficient to see for any particular value of $h_4 (t)$, for example when $t = 1$. This gives $h_4 (1) = \mbox{135168000}$, hereby proving that $h_4 (t) > 0$ for all $t > 0$, consequently, $v_{4a} (x)
\ge 0$, for all $x > 0$ giving $v_4 (x) \ge 0$, for all $x > 0$. Finally, proving the required result.

\bigskip
\item \textbf{For }$\bf{\frac{P_4 + 6N_2 - A}{6} \le P_4 + P_6 - S}$\textbf{: }We have
to show that
\[
\textstyle{1 \over 6}\left( {5P_4 + 6P_6 + A - 6S - 6N_2 } \right) \ge 0.
\]

We can write $5P_4 + 6P_6 + A - 6S - 6N_2 = \textstyle{1 \over 2}b\,g_5 \left( {a \mathord{\left/ {\vphantom {a b}} \right. \kern-\nulldelimiterspace} b} \right)$, where
\[
g_5 (x) = \frac{u_5 (x)}{\left( {\sqrt x + 1} \right)^2\left( {x + 1}
\right)},
\]

\noindent with
\begin{align}
u_5 (x) & = 13x^3 + 26x^{5 / 2} + 55x^2 + 4x^{3 / 2} + 55x + 26\sqrt x + 13\notag\\
& \hspace{20pt} - 3\left( {x + 1} \right)\left( {\sqrt x + 1} \right)^2\left[ {2\sqrt {2x^2
+ 2} + \left( {\sqrt x + 1} \right)\sqrt {2x + 2} } \right].\notag
\end{align}

Now we shall show that $u_5 (x) \ge 0$, $\forall x > 0$. Again in this case also we shall apply twice the argument given in Part 1. Let us consider
\begin{align}
v_5 (x) & = \left( {13x^3 + 26x^{5 / 2} + 55x^2 + 4x^{3 / 2} + 55x + 26\sqrt x
+ 13} \right)^2\notag\\
& \hspace{20pt} - \left\{ {3\left( {x + 1} \right)\left( {\sqrt x + 1} \right)^2\left[
{2\sqrt {2x^2 + 2} + \left( {\sqrt x + 1} \right)\sqrt {2x + 2} } \right]}
\right\}^2\notag\\
& = \left( {\begin{array}{l}
 79 + 4948x^3 + 1312x^{5 / 2} + 2449x^2 + 1416x^{3 / 2} + \\
 + 1206x + 280\sqrt x + 2449x^4 + 1312x^{7 / 2} + \\
 + 79x^6 + 280x^{11 / 2} + 1206x^5 + 1416x^{9 / 2} \\
 \end{array}} \right)\notag\\
& \hspace{20pt} - 36\sqrt {2x^2 + 2} \sqrt {2x + 2} \left( {\begin{array}{l}
 x^{9 / 2} + 5x^4 + 12x^{7 / 2} + \\
 + 20x^3 + 26x^{5 / 2} + 26x^2 + \\
 + 20x^{3 / 2} + 12x + 5\sqrt x + 1 \\
 \end{array}} \right).\notag
\end{align}

Let us consider again
\begin{align}
v_{5a} (x) & = \left( {\begin{array}{l}
 79 + 4948x^3 + 1312x^{5 / 2} + 2449x^2 + 1416x^{3 / 2} + \\
 + 1206x + 280\sqrt x + 2449x^4 + 1312x^{7 / 2} + \\
 + 79x^6 + 280x^{11 / 2} + 1206x^5 + 1416x^{9 / 2} \\
 \end{array}} \right)^2\notag\\
& \hspace{20pt} - \left[ {36\sqrt {2x^2 + 2} \sqrt {2x + 2} \left( {\begin{array}{l}
 x^{9 / 2} + 5x^4 + 12x^{7 / 2} + \\
 + 20x^3 + 26x^{5 / 2} + 26x^2 + \\
 + 20x^{3 / 2} + 12x + 5\sqrt x + 1 \\
 \end{array}} \right)} \right]^2\notag\\
& = \left( {\sqrt x - 1} \right)^4w_5 (x),\notag
\end{align}

\noindent where
\[
w_5 (x) = \left( {\begin{array}{l}
 1057 - 3372\sqrt x - 10082x + 375981x^2 + \\
 + 3864616x^3 + 9785650x^4 + 1057x^{10} - 10082x^9 + \\
 + 375981x^8 + 3864616x^7 + 9785650x^6 + \\
 + 13943412x^5 + 1940x^{3 / 2} + 1462448x^{5 / 2} - 3372x^{19 / 2} + \\
 + 1940x^{17 / 2} + 1462448x^{15 / 2} + 6489648x^{13 / 2} + \\
 + 11758264x^{11 / 2} + 11758264x^{9 / 2} + 6489648x^{7 / 2} \\
 \end{array}} \right).
\]

Now we shall show that $w_5 (x) > 0$, $\forall x > 0$. Let us consider
\[
h_5 (t) = w_5 (t^2) = \left( {\begin{array}{l}
 1057t^{20} - 3372t^{19} - 10082t^{18} + 1940t^{17} + \\
 + 375981t^{16} + 1462448t^{15} + 3864616t^{14} + \\
 + 6489648t^{13} + 9785650t^{12} + 11758264t^{11} + \\
 + 13943412t^{10} + 11758264t^9 + 9785650t^8 + \\
 + 6489648t^7 + 3864616t^6 + 1462448t^5 + \\
 + 375981t^4 + 1940t^3 - 10082t^2 - 3372t + 1057 \\
 \end{array}} \right).
\]

The polynomial equation $h_5 (t) = 0$ of 20$^{th}$ degree admits 20 solutions. All of them are convex (not written here). This means that there are no real positive solutions of the equation $h_5 (t) = 0$. Thus we conclude that either $h_5 (t) > 0$ or $h_5 (t) < 0$, for all $t > 0$. Since $h_5 (1) = \mbox{81395712}$, this gives that $h_{17} (t) > 0$ for all $t > 0$, consequently, $v_{5a} (x) \ge 0$, for all $x > 0$ giving $v_5 (x) \ge 0$, for all $x > 0$. Finally, we have the required result.

\bigskip
\item \textbf{For }$\bf{\frac{10N_2 + H - P_5 }{10} \le \frac{P_5 + 2G}{3}}$\textbf{:
}We have to show that
\[
\textstyle{1 \over {30}}\left( {13P_5 + 20G - 30N_2 - 3H} \right) \ge 0.
\]

We can write $13P_5 + 20G - 30N_2 - 3H = \textstyle{1 \over 2}b\,g_6 \left( {a \mathord{\left/ {\vphantom {a b}} \right. \kern-\nulldelimiterspace} b} \right)$, where
\[
g_6 (x) = \frac{u_6 (x)}{\left( {\sqrt x + 1} \right)^2\left( {x + 1}
\right)},
\]

\noindent with
\begin{align}
u_6 (x) & = 26x^3 + 40x^{5 / 2} + 146x^2 + 56x^{3 / 2} + 146x + 40\sqrt x + 26\notag\\
& \hspace{20pt} - 15\left( {x + 1} \right)\left( {\sqrt x + 1} \right)^3\sqrt {2x + 2}.\notag
\end{align}

Now we shall show that $u_6 (x) \ge 0$, $\forall x > 0$. Let us consider
\begin{align}
v_6 (x) & = \left( {26x^3 + 40x^{5 / 2} + 146x^2 + 56x^{3 / 2} + 146x +
40\sqrt x + 26} \right)^2\notag\\
& \hspace{20pt} - \left[ {15\left( {x + 1} \right)\left( {\sqrt x + 1} \right)^3\sqrt {2x +
2} } \right]^2.\notag\\
& = 2\left( {\sqrt x - 1} \right)^4\left( {\begin{array}{l}
 113x^4 + 142x^{7 / 2} + 436x^3 + 90x^{5 / 2} + \\
 + 718x^2 + 90x^{3 / 2} + 436x + 142\sqrt x + 113 \\
 \end{array}} \right).\notag
\end{align}

Since $v_6 (x) \ge 0$, giving $u_6 (x) \ge 0$, $\forall x > 0$, hence proving the required result.

\bigskip
\item \textbf{For }$\bf{\frac{P_6 + 6N_2 - A}{6} \le \frac{P_5 + 2G}{3}}$\textbf{: }We
have to show that
\[
\textstyle{1 \over 6}\left( {2P_5 + 4G + A - P_4 - 6N_2 } \right) \ge 0.
\]

We can write $2P_5 + 4G + A - P_4 - 6N_2 = \textstyle{1 \over 2}b\,g_7 \left( {a \mathord{\left/ {\vphantom {a b}} \right. \kern-\nulldelimiterspace} b} \right)$, where
\[
g_6 (x) = \frac{u_7 (x)}{\left( {\sqrt x + 1} \right)^2},
\]

\noindent with
\[
u_7 (x) = 5x^2 + 10x^{3 / 2} + 18x + 10\sqrt x + 5
 - 3\left( {\sqrt x + 1} \right)^3\sqrt {2x + 2} .
\]

Now we shall show that $u_6 (x) \ge 0$, $\forall x > 0$. Let us consider
\begin{align}
v_7 (x) &= \left( {5x^2 + 10x^{3 / 2} + 18x + 10\sqrt x + 5} \right)^2
 - \left[ {3\left( {\sqrt x + 1} \right)^3\sqrt {2x + 2} } \right]^2\notag\\
& = \left( {\sqrt x - 1} \right)^4\left( {7x^2 + 20x^{3 / 2} + 30x + 20\sqrt
x + 7} \right).\notag
\end{align}

Since $v_7 (x) \ge 0$, giving $u_7 (x) \ge 0$, $\forall x > 0$, hence proving the required result.

\bigskip
\item \textbf{For }$\bf{P_4 + P_6 - S \le \frac{P_6 + 2P_4 }{3}}$\textbf{: }We have to
show that
\[
\textstyle{1 \over 3}\left( {3S - 2P_6 - P_4 } \right) \ge 0.
\]

We can write $3S - 2P_6 - P_4 = \textstyle{1 \over 2}b\,g_8 \left( {a \mathord{\left/ {\vphantom {a b}} \right. \kern-\nulldelimiterspace} b} \right)$, where
\[
g_8 (x) = \frac{u_8 (x)}{\left( {x + 1} \right)\left( {\sqrt x + 1}
\right)^2},
\]

\noindent with
\begin{align}
u_8 (x) & = 4x^3 + 8x^{5 / 2} + 12x^2 + 12x + 8\sqrt x + 4\notag\\
& \hspace{20pt} - 3\left( {x + 1} \right)\left( {\sqrt x + 1} \right)^2\sqrt {2x^2 + 2}.\notag
\end{align}

Now we shall show that $u_8 (x) \ge 0$, $\forall x > 0$. Let us consider
\begin{align}
v_8 (x) & = \left( {4x^3 + 8x^{5 / 2} + 12x^2 + 12x + 8\sqrt x + 4} \right)^2\notag\\
& \hspace{20pt} - \left[ {3\left( {x + 1} \right)\left( {\sqrt x + 1} \right)^2\sqrt {2x^2
+ 2} } \right]^2.\notag\\
& = 2\left( {\sqrt x - 1} \right)^2\left( {\begin{array}{l}
 x^5 + 6x^{9 / 2} + 3x^4 + 12x^{7 / 2} + 36x^3 + \\
 + 76x^{5 / 2} + 36x^2 + 12x^{3 / 2} + 3x + 6\sqrt x + 1 \\
 \end{array}} \right).\notag
\end{align}

Since $v_8 (x) \ge 0$, giving $u_8 (x) \ge 0$, $\forall x > 0$, hence proving the required result.

\bigskip
\item \textbf{For }$\bf{P_4 + P_6 - S \le \frac{P_6 + 14N_1 }{15}}$\textbf{: }We have
to show that
\[
\textstyle{1 \over {15}}\left( {15S + 14N_1 - 14P_6 - 15P_4 } \right) \ge 0.
\]

We can write $15S + 14N_1 - 14P_6 - 15P_4 = \textstyle{1 \over 2}b\,g_9 \left( {a \mathord{\left/ {\vphantom {a b}} \right. \kern-\nulldelimiterspace} b} \right)$, where
\[
g_9 (x) = \frac{u_9 (x)}{\left( {x + 1} \right)\left( {\sqrt x + 1}
\right)^2},
\]

\noindent with
\begin{align}
u_9 (x) & = 15\left( {x + 1} \right)\left( {\sqrt x + 1} \right)^2\sqrt {2x^2
+ 2}\notag\\
& \hspace{20pt} - \left( {21x^3 + 28\sqrt x \left( {x - 1} \right)^2 + 99x^2 + 99x + 21}
\right).\notag
\end{align}

Now we shall show that $u_9 (x) \ge 0$, $\forall x > 0$. Let us consider
\begin{align}
u_9 (x) & = \left[ {15\left( {x + 1} \right)\left( {\sqrt x + 1}
\right)^2\sqrt {2x^2 + 2} } \right]^2\notag\\
& \hspace{20pt} - \left( {21x^3 + 28\sqrt x \left( {x - 1} \right)^2 + 99x^2 + 99x + 21}
\right)^2\notag\\
& = \left( {\sqrt x - 1} \right)^2\left( {\begin{array}{l}
 \sqrt x \left( {34x^3 + 571x^2 + 571x + 34} \right)\left( {\sqrt x - 1}
\right)^2 + \\
 + 9x^5 + 608x^{9 / 2} + x^4 + 827x^{7 / 2} + 6710x^{5 / 2} + \\
 + 827x^{3 / 2} + x + 608\sqrt x + 9 \\
 \end{array}} \right). \notag
\end{align}

Since $v_9 (x) \ge 0$, giving $u_9 (x) \ge 0$, $\forall x > 0$, hence proving the required result.

\bigskip
\item \textbf{For }$\bf{P_4 + P_6 - S \le \frac{S + 8N_1 }{9}}$\textbf{: }We have to
show that
\[
\textstyle{1 \over 9}\left( {10S + 8N_1 - 9P_6 - 9P_4 } \right) \ge 0.
\]

We can write $10S + 8N_1 - 9P_6 - 9P_4 = b\,g_{10} \left( {a \mathord{\left/ {\vphantom {a b}} \right. \kern-\nulldelimiterspace} b} \right)$, where
\[
g_{10} (x) = \frac{u_{10} (x)}{\left( {x + 1} \right)\left( {\sqrt x + 1}
\right)^2},
\]

\noindent with
\begin{align}
u_{10} (x) & = 5\left( {x + 1} \right)\left( {\sqrt x + 1} \right)^2\sqrt
{2x^2 + 2}\notag\\
& \hspace{20pt} - \left[ {7x^3 + 10x^{5 / 2} + 23x^2 + 8x\left( {\sqrt x - 1} \right)^2 +
23x + 10\sqrt x + 7} \right].\notag
\end{align}

Now we shall show that $u_{10} (x) \ge 0$, $\forall x > 0$. Let us consider
\begin{align}
u_{10} (x) & = \left[ {5\left( {x + 1} \right)\left( {\sqrt x + 1}
\right)^2\sqrt {2x^2 + 2} } \right]^2\notag\\
& \hspace{20pt} - \left[ {7x^3 + 10x^{5 / 2} + 23x^2 + 8x\left( {\sqrt x - 1} \right)^2 +
23x + 10\sqrt x + 7} \right]^2\notag\\
&= \left( {\sqrt x - 1} \right)^2\left( {\begin{array}{l}
 \sqrt x \left( {6x^3 + 37x^2 + 37x + 6} \right)\left( {\sqrt x - 1}
\right)^2 + \\
 + x^5 + 56x^{9 / 2} + x^4 + 71x^{7 / 2} + 690x^{5 / 2} + \\
 + 71x^{3 / 2} + 56\sqrt x + x + 1 \\
 \end{array}} \right). \notag
\end{align}

Since $v_{10} (x) \ge 0$, giving $u_{10} (x) \ge 0$, $\forall x > 0$, hence proving the required result.

\bigskip
\item \textbf{For }$\bf{\frac{P_5 + 2G}{3} \le \frac{P_6 + 2P_4 }{3}}$\textbf{:} We
have to show that
\[
\textstyle{1 \over 3}\left( {P_6 + 2P_4 - P_5 - 2G} \right) \ge 0.
\]

We can write $P_6 + 2P_4 - P_5 - 2G = b\,g_{11} \left( {a \mathord{\left/ {\vphantom {a b}} \right. \kern-\nulldelimiterspace} b} \right)$, where
\[
g_{11} (x) = \frac{x\left( {\sqrt x - 1} \right)^2}{\left( {x + 1}
\right)\left( {x^{3 / 2} + 1} \right)\left( {\sqrt x + 1} \right)}.
\]

Since $g_{11} (x) \ge 0, \, \forall x > 0$, hence proving the required result.

\bigskip
\item \textbf{For }$\bf{\frac{3A + G}{4} \le \frac{P_5 + 9N_3 }{10}}$\textbf{:} We have
to show that
\[
\textstyle{1 \over {20}}\left( {2P_5 + 18N_3 - 15A - 5G} \right) \ge 0.
\]

We can write $2P_5 + 18N_3 - 15A - 5G = \textstyle{1 \over 2}b\,g_{12} \left( {a \mathord{\left/ {\vphantom {a b}} \right. \kern-\nulldelimiterspace} b} \right)$, where
\[
g_{12} (x) = \frac{\left( {\sqrt x - 1} \right)^4}{\left( {\sqrt x + 1}
\right)^2}.
\]

Since $g_{12} (x) \ge 0, \, \forall x > 0$, hence proving the required result.

\bigskip
\item \textbf{For }$\bf{\frac{P_5 + 9N_3 }{10} \le \frac{S + 4N_2 }{5}}$\textbf{: }We
have to show that
\[
\textstyle{1 \over {10}}\left( {2S + 8N_2 - P_5 - 9N_3 } \right) \ge 0.
\]

We can write $2S + 8N_2 - P_5 - 9N_3 = b\,g_{13} \left( {a \mathord{\left/ {\vphantom {a b}} \right. \kern-\nulldelimiterspace} b} \right)$, where
\[
g_{13} (x) = \frac{u_{13} (x)}{\left( {\sqrt x + 1} \right)^2},
\]

\noindent with
\begin{align}
u_{13} (x) & = \left( {\sqrt x + 1} \right)^2\left[ {\sqrt {2x^2 + 2} + 2\sqrt
{2x + 2} \left( {\sqrt x + 1} \right)} \right]\notag\\
& \hspace{20pt} - \left( {4x^2 + 9x^{3 / 2} + 14x + 9\sqrt x + 4} \right).\notag
\end{align}

Now we shall show that $u_{13} (x) \ge 0$, $\forall x > 0$. In order to prove it we shall apply twice the argument given in Part 1. Let us consider
\begin{align}
v_{13} (x) & = \left( {\sqrt x + 1} \right)\left[ {\sqrt {2x^2 + 2} + 2\sqrt
{2x + 2} \left( {\sqrt x + 1} \right)} \right]^2\notag\\
& \hspace{20pt} - \left( {4x^2 + 9x^{3 / 2} + 14x + + 9\sqrt x + 4} \right)^2\notag\\
& = 4\sqrt {2x^2 + 2} \sqrt {2x + 2} \left( {\sqrt x + 1} \right)^5\notag\\
& \hspace{20pt} - \left( {\begin{array}{l}
 6x^4 + 16x^{7 / 2} + 53x^3 + 108x^{5 / 2} + \\
 + 146x^2 + 108x^{3 / 2} + 53x + 16\sqrt x + 6 \\
 \end{array}} \right).\notag
\end{align}

Let us consider again
\begin{align}
v_{13a} (x) & = \left[ {4\sqrt {2x^2 + 2} \sqrt {2x + 2} \left( {\sqrt x + 1}
\right)^5} \right]^2\notag\\
& \hspace{20pt} - \left( {\begin{array}{l}
 6x^4 + 16x^{7 / 2} + 53x^3 + 108x^{5 / 2} + \\
 + 146x^2 + 108x^{3 / 2} + 53x + 16\sqrt x + 6 \\
 \end{array}} \right)^2\notag\\
& = \left( {\sqrt x - 1} \right)^2\left( {\begin{array}{l}
 28 + 504\sqrt x + 3032x + 27111x^2 + 72213x^3 + \\
 + 27111x^5 + 3032x^6 + 28x^7 + 72213x^4 + \\
 + 10888x^{3 / 2} + 50366x^{5 / 2} + 504x^{13 / 2} + \\
 + 10888x^{11 / 2} + 50366x^{9 / 2} + 81316x^{7 / 2} \\
 \end{array}} \right).\notag
\end{align}

Since $v_{13a} (x) \ge 0$, giving $v_{13} (x) \ge 0$, $\forall x > 0$. This implies that $u_{13} (x) \ge 0$, $\forall x > 0$, hence proving the required result.

\bigskip
\item \textbf{For }$\bf{\frac{P_5 + 9N_3 }{10} \le \frac{S + 3N_3 }{4}}$\textbf{: }We
have to show that
\[
\textstyle{1 \over {20}}\left( {5S - 3N_3 - 2P_5 } \right) \ge 0.
\]

We can write $5S - 3N_3 - 2P_5 = \textstyle{1 \over 2}b\,g_{14} \left( {a \mathord{\left/ {\vphantom {a b}} \right. \kern-\nulldelimiterspace} b} \right)$, where
\[
g_{14} (x) = \frac{u_{14} (x)}{\left( {\sqrt x + 1} \right)^2},
\]

\noindent with
\[
u_{14} (x) = 5\sqrt {2x^2 + 2} \left( {\sqrt x + 1} \right)^2 - 2\left(
{3x^2 + 3x^{3 / 2} + 8x + 3\sqrt x + 3} \right).
\]

Now we shall show that $u_{14} (x) \ge 0$, $\forall x > 0$. Let us consider
\begin{align}
v_{14} (x) & = \left[ {5\sqrt {2x^2 + 2} \left( {\sqrt x + 1} \right)^2}
\right]^2 - \left[ {2\left( {3x^2 + 3x^{3 / 2} + 8x + 3\sqrt x + 3} \right)}
\right]^2\notag\\
& = 2\left( {\sqrt x - 1} \right)^2\left( {\begin{array}{l}
 7x^3 + 78x^{5 / 2} + 185x^2 + \\
 + 260x^{3 / 2} + 185x + 78\sqrt x + 7 \\
 \end{array}} \right).\notag
\end{align}

Since $v_{14} (x) \ge 0$, giving $u_{14} (x) \ge 0$, $\forall x > 0$, hence proving the required result.

\bigskip
\item \textbf{For }$\bf{\frac{P_5 + 9N_3 }{10} \le \frac{2S + H}{3}}$\textbf{: }We have
to show that
\[
\textstyle{1 \over {30}}\left( {20S + 10H - 3P_5 - 27N_3 } \right) \ge 0.
\]

We can write $20S + 10H - 3P_5 - 27N = b\,g_{15} \left( {a \mathord{\left/ {\vphantom {a b}} \right. \kern-\nulldelimiterspace} b} \right)$, where
\[
g_{15} (x) = \frac{u_8 (x)}{\left( {x + 1} \right)\left( {\sqrt x + 1}
\right)^2},
\]

\noindent with
\begin{align}
u_{15} (x) & = 10\sqrt {2x^2 + 2} \left( {\sqrt x + 1} \right)^2\left( {x + 1}
\right)\notag\\
& \hspace{20pt}  - \left( {12x^3 + 27x^{5 / 2} + 14x^{3 / 2} + 34x^2 + 34x + 27\sqrt x + 12}
\right).\notag
\end{align}

Now we shall show that $u_{15} (x) \ge 0$, $\forall x > 0$. Let us consider
\begin{align}
v_{15} (x) & = \left[ {10\sqrt {2x^2 + 2} \left( {\sqrt x + 1} \right)^2\left(
{x + 1} \right)} \right]^2\notag\\
& \hspace{20pt}  - \left( {12x^3 + 27x^{5 / 2} + 14x^{3 / 2} + 34x^2 + 34x + 27\sqrt x + 12}
\right)^2\notag\\
& = \left( {\sqrt x - 1} \right)^2\left( {\begin{array}{l}
 56x^5 + 264x^{9 / 2} + 527x^4 + 1018x^{7 / 2} + \\
 + 1781x^3 + 2308x^{5 / 2} + 1781x^2 + \\
 + 1018x^{3 / 2} + 527x + 264\sqrt x + 56 \\
 \end{array}} \right).\notag
\end{align}

Since $v_{15} (x) \ge 0$, giving $u_{15} (x) \ge 0$, $\forall x > 0$, hence proving the required result.

\bigskip
\item \textbf{For }$\bf{\frac{P_5 + 9N_3 }{10} \le \frac{8P_6 + 9H - 9A}{8}}$\textbf{:
}We have to show that
\[
\textstyle{1 \over {40}}\left( {40P_6 + 45H - 45A - 4P_5 - 36N_3 } \right)
\ge 0.
\]

We can write $40P_6 + 45H - 45A - 4P_5 - 36N_3 = \textstyle{1 \over 2}b\,g_{16} \left( {a \mathord{\left/ {\vphantom {a b}} \right. \kern-\nulldelimiterspace} b} \right)$, where
\[
g_{16} (x) = \frac{\left( {\sqrt x + 3} \right)\left( {3\sqrt x + 1}
\right)\left( {\sqrt x - 1} \right)^4}{\left( {x + 1} \right)\left( {\sqrt x
+ 1} \right)^2}.
\]

Since $g_{16} (x) \ge 0$, $\forall x > 0$, hence proving the required result.

\bigskip
\item \textbf{For }$\bf{\frac{8P_6 + 9H - 9A}{8} \le A}$\textbf{: }We have to show that
\[
\textstyle{1 \over 8}\left( {17A - 8P_6 - 9H} \right) \ge 0.
\]

We can write $17A - 8P_6 - 9H = \textstyle{1 \over 2}b\,g_{17} \left( {a \mathord{\left/ {\vphantom {a b}} \right. \kern-\nulldelimiterspace} b} \right)$, where
\[
g_{17} (x) = \frac{\left( {x - 1} \right)^2}{\left( {x + 1} \right)}.
\]

Obviously, $g_{17} (x) > 0,\,\forall x > 0$. This proves the required result.

\bigskip
\item \textbf{For }$\bf{\frac{14P_6 + 9P_3 - 9P_5 }{14} \le \frac{5H + 4S - 4P_4
}{5}}$\textbf{: }We have to show that
\[
\textstyle{1 \over {70}}\left( {45P_5 + 70H + 56S - 56P_4 - 70P_6 - 45P_3 }
\right) \ge 0.
\]

We can write $45P_5 + 70H + 56S - 56P_4  \,  - 70P_6 - 45P_3 = b\,g_{18} \left( {a \mathord{\left/ {\vphantom {a b}} \right. \kern-\nulldelimiterspace} b} \right)$, where
\[
g_{18} (x) = \frac{u_{18} (x)}{\left( {\sqrt x + 1} \right)\left( {x^{3 / 2}
+ 1} \right)\left( {x + 1} \right)},
\]

\noindent with
\begin{align}
u_{18} (x) & = 28\sqrt {2x^2 + 2} \left( {x + 1} \right)\left( {x^{3 / 2} + 1}
\right)\left( {\sqrt x + 1} \right)\notag\\
& \hspace{20pt}  - \left( {\begin{array}{l}
 25x^4 + 115x^{7 / 2} - 51x^3 - 69x^{5 / 2} + \\
 + 408x^2 - 69x^{3 / 2} - 51x + 115\sqrt x + 25 \\
 \end{array}} \right).\notag
\end{align}

Now we shall show that $u_{18} (x) \ge 0$, $\forall x > 0$. Let us consider
\begin{align}
v_{18} (x) & = 28\sqrt {2x^2 + 2} \left( {x + 1} \right)\left( {x^{3 / 2} + 1}
\right)\left( {\sqrt x + 1} \right)\notag\\
& \hspace{20pt}  - \left( {\begin{array}{l}
 25x^4 + 115x^{7 / 2} - 51x^3 - 69x^{5 / 2} + \\
 + 408x^2 - 69x^{3 / 2} - 51x + 115\sqrt x + 25 \\
 \end{array}} \right)\notag\\
& = \left( {\sqrt x - 1} \right)^2w_{18} (x),\notag
\end{align}

\noindent where
\[
w_{18} (x) = \left( {\begin{array}{l}
 943x^7 - 728x^{13 / 2} - 8370x^6 + 8576x^{11 / 2} + \\
 + 30935x^5 - 28454x^{9 / 2} - 12184x^4 + \\
 + 81284x^{7 / 2} - 12184x^3 - 28454x^{5 / 2} \\
 + 30935x^2 + 8576x^{3 / 2} - 8370x - 728\sqrt x + 943 \\
 \end{array}} \right).
\]

Now we shall show that $w_{18} (x) > 0$, $\forall x > 0$. Let us consider
\[
h_{18} (t) = w_{18} (t^2) = \left( {\begin{array}{l}
 943t^{14} - 728t^{13} - 8370t^{12} + 8576t^{11} + \\
 + 30935t^{10} - 28454t^9 - 12184t^8 + \\
 + 81284t^7 - 12184t^6 - 28454t^5 + \\
 + 30935t^4 + 8576t^3 - 8370t^2 - 728t + 943 \\
 \end{array}} \right).
\]

The polynomial equation $h_{18} (t) = 0$ of 14$^{th}$ degree admits 14 solutions. Out of them 12 are convex (not written here) and two of them are real given by $ - 1.566438336$ and $ - 0.6383909134$. These two real solutions are negative. Since we are working with $t > 0$, this means that there are no real positive solutions of the equation $h_{18} (t) = 0$. Since, $h_{18} (1) = \mbox{62720}$, this gives that $h_{18} (t) > 0$, for all $t > 0$. Thus we have $w_{18} (x) > 0$, for all $x > 0$ giving $v_{18} (x) \ge 0$, for all $x > 0$. Finally, proving the required result.

\bigskip
\item \textbf{For }$\bf{\frac{14P_6 + 9P_3 - 9P_5 }{14} \le \frac{P_5 + 6N_2 -
G}{6}}$\textbf{: }Equivalently, we have to show that
\[
\textstyle{1 \over {42}}\left( {34P_5 + 42N_2 - 42P_6 - 27P_3 - 7G} \right)
\ge 0.
\]

We can write $34P_5 + 42N_2 - 42P_6 - 27P_3  \,  - 7G = \textstyle{1 \over 2}b\,g_{19} \left( {a \mathord{\left/ {\vphantom {a b}} \right. \kern-\nulldelimiterspace} b} \right)$, where
\[
g_{19} (x) = \frac{u_{19} (x)}{\left( {\sqrt x + 1} \right)\left( {x^{3 / 2}
+ 1} \right)\left( {x + 1} \right)},
\]

\noindent with
\begin{align}
u_{19} (x) & = 21\sqrt {2x + 2} \left( {x + 1} \right)\left( {x^{3 / 2} + 1}
\right)\left( {\sqrt x + 1} \right)^2\notag\\
& \hspace{20pt}  - \left( {\begin{array}{l}
 16x^4 + 64x^{7 / 2} + 256x^{5 / 2} + 256x^{3 / 2} + 64\sqrt x + \\
 + 2\sqrt x \left( {51x^2 + 26x + 51} \right)\left( {\sqrt x - 1} \right)^2
+ 16 \\
 \end{array}} \right)\notag
\end{align}

Now we shall show that $u_{19} (x) \ge 0$, $\forall x > 0$. Let us consider
\begin{align}
v_{19} (x) & = \left[ {21\sqrt {2x + 2} \left( {x + 1} \right)\left( {x^{3 /
2} + 1} \right)\left( {\sqrt x + 1} \right)^2} \right]^2 \notag\\
& \hspace{20pt}  - \left( {\begin{array}{l}
 16x^4 + 64x^{7 / 2} + 256x^{5 / 2} + 256x^{3 / 2} + 64\sqrt x + \\
 + 2\sqrt x \left( {51x^2 + 26x + 51} \right)\left( {\sqrt x - 1} \right)^2
+ 16 \\
 \end{array}} \right)^2\notag\\
& = 2\left( {\sqrt x - 1} \right)^2w_{19} (x),\notag
\end{align}

\noindent where
\[
w_{19} (x) = \left( {\begin{array}{l}
 313x^7 - 266x^{13 / 2} - 7390x^6 + 20728x^{11 / 2} - \\
 - 25128x^5 + 41882x^{9 / 2} - 36915x^4 + \\
 + 70000x^{7 / 2} - 36915x^3 + 41882x^{5 / 2} - \\
 - 25128x^2 + 20728x^{3 / 2} - 7390x - 266\sqrt x + 313 \\
 \end{array}} \right).
\]

Now we shall show that $w_{19} (x) > 0$, $\forall x > 0$. Let us consider
\[
h_{19} (t) = w_{19} (t^2) = \left( {\begin{array}{l}
 313t^{14} - 266t^{13} - 7390t^{12} + 20728t^{11} - 25128t^{10} + \\
 + 41882t^9 - 36915t^8 + 70000t^7 - 36915t^6 + \\
 + 41882t^5 - 25128t^4 + 20728t^3 - 7390t^2 - 266t + 313 \\
 \end{array}} \right).
\]

The polynomial equation $h_{19} (t) = 0$ of 14$^{th}$ degree admits 14 solutions. Out of them 12 are convex (not written here) and two are real given by $ - 5.779189781$and $ - 0.1730346360$. These two real solutions are negative. Since we are working with $t > 0$, this means that there are no real positive solutions of the equation $h_{19} (t) = 0$. Thus we conclude that either $h_{19} (t) > 0$ or $h_{19} (t) < 0$, for all $t > 0$. In order to check it is sufficient to see for any particular value of $h_{19} (t)$, for example when $t = 1$. This gives $h_{19} (1) = \mbox{56448}$, hereby proving that $h_{19} (t) > 0$ for all $t > 0$, consequently, $w_{19} (x) > 0$, for all $x > 0$ giving $v_{19} (x) \ge 0$, for all $x > 0$. Finally, proving the required result.

\bigskip
\item \textbf{For }$\bf{\frac{P_5 + 6N_2 - G}{6} \le \frac{P_5 + 2N_2 }{3}}$\textbf{: }Equivalently,
we have to show that
\[
\textstyle{1 \over 6}\left( {P_5 - 2N_2 + G} \right) \ge 0.
\]

We can write $P_5 - 2N_2 + G = \textstyle{1 \over 2}b\,g_{20} \left( {a \mathord{\left/ {\vphantom {a b}} \right. \kern-\nulldelimiterspace} b} \right)$, where
\[
g_{20} (x) = \frac{u_{20} (x)}{\left( {\sqrt x + 1} \right)^2},
\]

\noindent with
\[
u_{20} (x) = 2\left( {x^2 + x^{3 / 2} + 4x + \sqrt x + 1} \right) - \sqrt
{2x + 2} \left( {\sqrt x + 1} \right)^3
\]

Now we shall show that $u_{20} (x) \ge 0$, $\forall x > 0$. Let us consider
\begin{align}
v_{20} (x) & = \left[ {2\left( {x^2 + x^{3 / 2} + 4x + \sqrt x + 1} \right)}
\right]^2 - \left[ {\sqrt {2x + 2} \left( {\sqrt x + 1} \right)^3}
\right]^2\notag\\
& = 2\left( {\sqrt x - 1} \right)^4\left( {x^2 + 2x^{3 / 2} + 4x + 2\sqrt x +
1} \right).\notag
\end{align}

Since $v_{20} (x) \ge 0$, giving $u_{20} (x) \ge 0$, $\forall x > 0$, hence proving the required result.

\bigskip
\item \textbf{For }$\bf{\frac{P_5 + 2N_2 }{3} \le \frac{6P_6 + 7H - 6N_2
}{7}}$\textbf{: }Equivalently, we have to show that
\[
\textstyle{1 \over {21}}\left( {18P_6 + 21H - 14N_2 - 18N_3 - 7P_5 } \right)
\ge 0.
\]

We can write $18P_6 + 21H - 14N_2 - 18N_3 - 7P_5 = \textstyle{1 \over 2}b\,g_{21} \left( {a \mathord{\left/ {\vphantom {a b}} \right. \kern-\nulldelimiterspace} b} \right)$, where
\[
g_{21} (x) = \frac{u_{21} (x)}{\left( {x + 1} \right)\left( {\sqrt x + 1}
\right)^2},
\]

\noindent with
\begin{align}
u_{21} (x) & = 2\left( {5x^3 + 18x^{5 / 2} + 9x^2 + 48x^{3 / 2} + 9x + 18\sqrt
x + 5} \right)\notag\\
& \hspace{20pt}  - 7\sqrt {2x + 2} \left( {\sqrt x + 1} \right)^3\left( {x + 1} \right).\notag
\end{align}

Now we shall show that $u_{21} (x) \ge 0$, $\forall x > 0$. Let us consider
\begin{align}
v_{21} (x) & = \left[ {2\left( {5x^3 + 18x^{5 / 2} + 9x^2 + 48x^{3 / 2} + 9x +
18\sqrt x + 5} \right)} \right]^2\notag\\
& \hspace{20pt}  - \left[ {7\sqrt {2x + 2} \left( {\sqrt x + 1} \right)^3\left( {x + 1}
\right)} \right]^2\notag\\
& = 2\left( {\sqrt x - 1} \right)^4\left( {\begin{array}{l}
 x^4 + 70x^{7 / 2} + 220x^3 + 210x^{5 / 2} + \\
 + 510x^2 + 210x^{3 / 2} + 220x + 70\sqrt x + 1 \\
 \end{array}} \right).\notag
\end{align}

Since $v_{21} (x) \ge 0$, giving $u_{21} (x) \ge 0$, $\forall x > 0$, hence proving the required result.

\bigskip
\item \textbf{For }$\bf{\frac{5H + 4S - 4P_4 }{5} \le \frac{12P_5 + 13P_2 - 12P_1
}{13}}$\textbf{: }We have to show that
\[
\textstyle{1 \over {65}}\left( {60P_5 + 65P_2 + 52P_4 - 60P_1 - 65H - 52S}
\right) \ge 0.
\]

We can write $60P_5 + 65P_2 + 52P_4 - 60P_1 - 65H - 52S = b\,g_{22} \left( {a \mathord{\left/ {\vphantom {a b}} \right. \kern-\nulldelimiterspace} b} \right)$, where
\[
g_{22} (x) = \frac{u_{22} (x)}{\left( {\sqrt x + 1} \right)^2\left( {x^3 +
1} \right)\left( {x^2 + 1} \right)},
\]

\noindent with
\begin{align}
u_{22} (x) & = \left( {\begin{array}{l}
 60x^7 + 203x^6 - 250x^{11 / 2} + 190x^5 + \\
 + 390x^{9 / 2} + 203x^4 - 760x^{7 / 2} + 203x^3 + \\
 + 390x^{5 / 2} + 190x^2 - 250x^{3 / 2} + 203x + 60 \\
 \end{array}} \right)\notag\\
& \hspace{20pt}  - 26\sqrt {2x^2 + 2} \left( {x^2 + 1} \right)\left( {x^3 + 1} \right)\left(
{\sqrt x + 1} \right)^2.\notag
\end{align}

Now we shall show that $u_{22} (x) \ge 0$, $\forall x > 0$. Let us consider
\begin{align}
v_{22} (x) & = \left( {\begin{array}{l}
 60x^7 + 203x^6 - 250x^{11 / 2} + 190x^5 + \\
 + 390x^{9 / 2} + 203x^4 - 760x^{7 / 2} + 203x^3 + \\
 + 390x^{5 / 2} + 190x^2 - 250x^{3 / 2} + 203x + 60 \\
 \end{array}} \right)^2\notag\\
& \hspace{20pt}  - \left[ {26\sqrt {2x^2 + 2} \left( {x^2 + 1} \right)\left( {x^3 + 1}
\right)\left( {\sqrt x + 1} \right)^2} \right]^2.\notag\\
& = \left( {\sqrt x - 1} \right)^2w_{22} (x),\notag
\end{align}

\noindent where
\[
w_{22} (x) = \left( {\begin{array}{l}
 2248x^{13} - 912x^{25 / 2} + 12176x^{12} - 10144x^{23 / 2} + \\
 + 26137x^{11} - 8506x^{21 / 2} + 93811x^{10} + 141228x^{19 / 2} + \\
 + 112187x^9 - 158954x^{17 / 2} + 249211x^8 + \\
 + 505084x^{15 / 2} + 137574x^7 - 471720x^{13 / 2} + \\
 + 137574x^6 + 505084x^{11 / 2} + 249211x^5 - 158954x^{9 / 2} + \\
 + 112187x^4 + 141228x^{7 / 2} + 93811x^3 - 8506x^{5 / 2} + \\
 + 26137x^2 - 10144x^{3 / 2} + 12176x - 912\sqrt x + 2248 \\
 \end{array}} \right).
\]

Now we shall show that $w_{22} (x) > 0$, $\forall x > 0$. Let us reorganize the positive and negative terms of $w_{22} (x)$ and apply again the argument given in Part 1, by considering
\begin{align}
w_{22a} (x) & = \left( {\begin{array}{l}
 2248x^{13} + 12176x^{12} + 26137x^{11} + 93811x^{10} + 141228x^{19 / 2} +
\\
 + 112187x^9 + 249211x^8 + 505084x^{15 / 2} + 137574x^7 + \\
 + 137574x^6 + 505084x^{11 / 2} + 249211x^5 + 112187x^4 + \\
 + 141228x^{7 / 2} + 93811x^3 + 26137x^2 + 12176x + 2248 \\
 \end{array}} \right)^2\notag\\
& \hspace{20pt}  - \left( {\begin{array}{l}
 912x^{25 / 2} + 10144x^{23 / 2} + 8506x^{21 / 2} + \\
 + 158954x^{17 / 2} + 471720x^{13 / 2} + 158954x^{9 / 2} + \\
 + 8506x^{5 / 2} + 10144x^{3 / 2} + 912\sqrt x \\
 \end{array}} \right)^2\notag\\
& = \left( {\begin{array}{l}
 5053504 + 53911552x + 939846800x^3 + 247264272x^2 + \\
 + 8394022562x^5 + 3439184256x^{9 / 2} + 9653410136x^{11 / 2} + \\
 + 54425545136x^7 + 3299451265x^4 + 634961088x^{7 / 2} + \\
 + 177455818032x^{35 / 2} + 388807633696x^{31 / 2} + 60361509952x^{37 / 2} +
\\
 + 54425545136x^{19} + 18127427183x^{20} + 60361509952x^{15 / 2} + \\
 + 61435331831x^8 + 38797285384x^{13 / 2} + 399552515016x^{11} + \\
 + 330074264536x^{23 / 2} + 179224041264x^{19 / 2} + 204575899694x^9 + \\
 + 388807633696x^{21 / 2} + 177455818032x^{17 / 2} + 169865767282x^{12} \\
 \end{array}} \right)\notag\\
& \hspace{20pt}  + \left( {\begin{array}{l}
 131742209367x^{10} + 483216926288x^{13} + 448903200968x^{25 / 2} + \\
 + 38797285384x^{39 / 2} + 9653410136x^{41 / 2} + 8394022562x^{21} + \\
 + 3299451265x^{22} + 939846800x^{23} + 247264272x^{24} + \\
 + 3439184256x^{43 / 2} + 634961088x^{5 / 2} + 53911552x^{25} + \\
 + 5053504x^{26} + 18127427183x^6 + 169865767282x^{14} + \\
 + 399552515016x^{15} + 131742209367x^{16} + \\
 + 330074264536x^{29 / 2} + 204575899694x^{17} + \\
 + 61435331831x^{18} + 448903200968x^{27 / 2} + 179224041264x^{33 / 2} \\
 \end{array}} \right).\notag
\end{align}

Thus the positivity of $w_{22a} (x)$ proves that $w_{22} (x) > 0$, $\forall x > 0$. This completes the proof of the result.

\bigskip
\item \textbf{For }$\bf{\frac{5H + 4S - 4P_4 }{5} \le \frac{4H + 5S - 5G}{4}}$\textbf{:
}We have to show that
\[
\textstyle{1 \over {20}}\left( {16P_4 + 9S - 25G} \right) \ge 0.
\]

We can write $16P_4 + 9S - 25G = \textstyle{1 \over 2}b\,g_{23} \left( {a \mathord{\left/ {\vphantom {a b}} \right. \kern-\nulldelimiterspace} b} \right)$, where
\[
g_{23} (x) = \frac{u_{23} (x)}{\left( {\sqrt x + 1} \right)^2},
\]

\noindent with
\[
u_{23} (x) = 9\left( {\sqrt x + 1} \right)^2\sqrt {2x^2 + 2}
 - \sqrt x \left[ {36\left( {x + 1} \right) + 14\left( {\sqrt x - 1}
\right)^2} \right].
\]

Now we shall show that $u_{23} (x) \ge 0$, $\forall x > 0$. Let us consider
\begin{align}
v_{23} (x) & = \left\{ {9\left( {\sqrt x + 1} \right)^2\sqrt {2x^2 + 2} }
\right\}^2\notag\\
& \hspace{20pt}  - \left\{ {\sqrt x \left[ {36\left( {x + 1} \right) + 14\left( {\sqrt x -
1} \right)^2} \right]} \right\}^2\notag\\
& = 2\left( {\sqrt x - 1} \right)^2\left( {\begin{array}{l}
 81x^3 + 486x^{5 / 2} + 127x^2 + \\
 + 1492x^{3 / 2} + 127x + 486\sqrt x + 81 \\
 \end{array}} \right).\notag
\end{align}

Since $v_{23} (x) \ge 0$, giving $u_{23} (x) \ge 0$, $\forall x > 0$, hence proving the required result.

\bigskip
\item \textbf{For }$\bf{\frac{7H + 6P_6 - 6N_3 }{7} \le \frac{12P_5 + 13P_2 - 12P_1
}{13}}$\textbf{: }We have to show that
\[
\textstyle{1 \over {91}}\left( {84P_5 + 91P_2 + 78N_3 - 84P_1 - 91H - 78P_6
} \right) \ge 0.
\]

We can write $84P_5 + 91P_2 + 78N_3 - 84P_1 - 91H - 78P_6 = b\,g_{24} \left( {a \mathord{\left/ {\vphantom {a b}} \right. \kern-\nulldelimiterspace} b} \right)$, where
\[
g_{24} (x) = \frac{u_{24} (x)}{\left( {\sqrt x + 1} \right)^2\left( {x^3 +
1} \right)\left( {x + 1} \right)},
\]

\noindent where
\[
u_{24} (x) = \left( {\sqrt x - 1} \right)^2\left( {\begin{array}{l}
 7\left( {x^5 + 2x^4 + 2x + 1} \right)\left( {\sqrt x - 1} \right)^2 + \\
 + 25x^6 + 16x^5 + 55x^4 + 322x^{7 / 2} + \\
 + 620x^3 + 322x^{5 / 2} + 55x^2 + 16x + 25 \\
 \end{array}} \right).
\]

Since $u_{24} (x) \ge 0,\,\forall x > 0$, this gives $u_{24} (x) \ge 0,\,\forall x > 0$, thereby proving the required result.

\bigskip
\item \textbf{For }$\bf{\frac{7H + 6P_6 - 6N_3 }{7} \le \frac{4H + 5S -
5G}{4}}$\textbf{: }We have to show that
\[
\textstyle{1 \over {28}}\left( {35S + 24N_3 - 35G - 24P_6 } \right) \ge 0.
\]

We can write $35S + 24N_3 - 35G - 24P_6 = \textstyle{1 \over 2}b\,g_{25} \left( {a \mathord{\left/ {\vphantom {a b}} \right. \kern-\nulldelimiterspace} b} \right)$, where
\[
g_{25} (x) = \frac{u_{25} (x)}{\left( {x + 1} \right)},
\]

\noindent with
\begin{align}
u_{25} (x) & = 35\left( {x + 1} \right)\sqrt {2x^2 + 2}\notag\\
& \hspace{20pt}  - \left[ {32x^2 + 38x^{3 / 2} + 16\sqrt x \left( {\sqrt x - 1} \right)^2 +
38\sqrt x + 32} \right].\notag
\end{align}

Now we shall show that $u_{25} (x) \ge 0$, $\forall x > 0$. Let us consider
\begin{align}
v_{25} (x) & = \left[ {35\left( {x + 1} \right)\sqrt {2x^2 + 2} } \right]^2\notag\\
& \hspace{20pt}  - \left[ {32x^2 + 38x^{3 / 2} + 16\sqrt x \left( {\sqrt x - 1} \right)^2 +
38\sqrt x + 32} \right]^2\notag\\
& = 2\left( {\sqrt x - 1} \right)^2\left( {\begin{array}{l}
 562x^3 + 151\left( {x^2 + 1} \right)\left( {\sqrt x - 1} \right)^2 + \\
 + 548x^2 + 1700x^{3 / 2} + 548x + 562 \\
 \end{array}} \right).\notag
\end{align}

Since $v_{25} (x) \ge 0$, giving $u_{25} (x) \ge 0$, $\forall x > 0$, hence proving the required result.

\bigskip
\item \textbf{For }$\bf{\frac{7H + 6P_6 - 6N_3 }{7} \le S}$\textbf{: }We have to show
that
\[
\textstyle{1 \over 7}\left( {7S + 6N_3 - 7H - 6P_6 } \right) \ge 0.
\]

We can write $7S + 6N_3 - 7H - 6P_6 = \textstyle{1 \over 2}b\,g_{26} \left({a \mathord{\left/ {\vphantom {a b}} \right. \kern-\nulldelimiterspace} b} \right)$, where
\[
g_{26} (x) = \frac{u_{26} (x)}{\left( {x + 1} \right)},
\]

\noindent with
\[
u_{26} (x) = 7\left( {x + 1} \right)\sqrt {2x^2 + 2}
 - \left( { - 8x^2 + 4x^{3 / 2} - 20x + 4\sqrt x - 8} \right).
\]

Now we shall show that $u_{26} (x) \ge 0$, $\forall x > 0$. Let us consider
\begin{align}
v_{26} (x) & = \left[ {7\left( {x + 1} \right)\sqrt {2x^2 + 2} } \right]^2
 - \left( { - 8x^2 + 4x^{3 / 2} - 20x + 4\sqrt x - 8} \right)^2\notag\\
& = 2\left( {\sqrt x - 1} \right)^2\left( {\begin{array}{l}
 17x^3 + 66x^{5 / 2} + 45x^2 + \\
 + 136x^{3 / 2} + 45x + 66\sqrt x + 17 \\
 \end{array}} \right).\notag
\end{align}

Since $v_{26} (x) \ge 0$, giving $u_{26} (x) \ge 0$, $\forall x > 0$, hence proving the required result.

\bigskip
\item \textbf{For }$\bf{\frac{P_6 + 2N_1 - G}{2} \le \frac{2P_4 + 5P_6 }{7}}$\textbf{:
}We have to show that
\[
\textstyle{1 \over {14}}\left( {4P_5 + 3P_6 + 7G - 14N_1 } \right) \ge 0.
\]

We can write $4P_5 + 3P_6 + 7G - 14N_1 = \textstyle{1 \over 2}b\,g_{27}
\left( {a \mathord{\left/ {\vphantom {a b}} \right.
\kern-\nulldelimiterspace} b} \right)$, where
\[
g_{27} (x) = \frac{\left( {\sqrt x - 1} \right)^2\left( {7x^2 + 12x^{3 / 2}
+ 26x + 12\sqrt x + 7} \right)}{\left( {x + 1} \right)\left( {\sqrt x + 1}
\right)^2},
\]

Since $g_{27} (x) \ge 0$, $\forall x > 0$, hence proving the required result.

\bigskip
\item \textbf{For }$\bf{\frac{P_6 + 2N_1 - G}{2} \le \frac{4N_2 + 5P_6 }{9}}$\textbf{:
}We have to show that
\[
\textstyle{1 \over {18}}\left( {18N_1 - P_6 - 9G - 8N_2 } \right) \ge 0.
\]

We can write $18N_1 - P_6 - 9G - 8N_2 = \textstyle{1 \over 2}b\,g_{28} \left( {a \mathord{\left/ {\vphantom {a b}} \right. \kern-\nulldelimiterspace} b} \right)$, where
\[
g_{28} (x) = \frac{u_{28} (x)}{\left( {x + 1} \right)},
\]

\noindent with
\[
u_{28} (x) = 7x^2 + 18x + 7 - 4\sqrt {2x + 2} \left( {\sqrt x + 1}
\right)\left( {x + 1} \right).
\]

Now we shall show that $u_{28} (x) \ge 0$, $\forall x > 0$. Let us consider
\begin{align}
v_{28} (x) & = \left( {7x^2 + 18x + 7} \right)^2 - \left[ {4\sqrt {2x + 2}
\left( {\sqrt x + 1} \right)\left( {x + 1} \right)} \right]^2\notag\\
& = \left( {\sqrt x - 1} \right)^4\left( {17x^2 + 4x^{3 / 2} + 38x + 4\sqrt x
+ 17} \right).\notag
\end{align}

Since $v_{28} (x) \ge 0$, giving $u_{28} (x) \ge 0$, $\forall x > 0$, hence proving the required result.

\bigskip
\item \textbf{For }$\bf{A + S - G \le \frac{12P_5 + 7P_2 - 12P_3 }{7}}$\textbf{: }We
have to show that
\[
\textstyle{1 \over 7}\left( {12P_5 + 7P_2 + 7G - 12P_3 - 7A - 7S} \right)
\ge 0.
\]

We can write $12P_5 + 7P_2 + 7G - 12P_3 - 7A - 7S = \textstyle{1 \over 2}b\,g_{29} \left( {a \mathord{\left/ {\vphantom {a b}} \right. \kern-\nulldelimiterspace} b} \right)$, where
\[
g_{29} (x) = \frac{u_{30} (x)}{\left( {\sqrt x + 1} \right)\left( {x^{3 / 2}
+ 1} \right)\left( {x^2 + 1} \right)},
\]

\noindent with
\begin{align}
u_{29} (x) & = \left( {\begin{array}{l}
 8x^5 + \left( {9x^4 + 48x^3 + 3x^2 + 48x + 9} \right)\left( {\sqrt x - 1}
\right)^2 \\
 x^{9 / 2} + 12x^4 + 35x^3 + + 35x^2 + 12x + \sqrt x + 8 \\
 \end{array}} \right)\notag\\
& \hspace{20pt}  - 7\sqrt {2x^2 + 2} \left( {x^{3 / 2} + 1} \right)\left( {x^2 + 1}
\right)\left( {\sqrt x + 1} \right).\notag
\end{align}

Now we shall show that $u_{29} (x) \ge 0$, $\forall x > 0$. Let us consider
\begin{align}
v_{29} (x) & = \left( {\begin{array}{l}
 8x^5 + \left( {9x^4 + 48x^3 + 3x^2 + 48x + 9} \right)\left( {\sqrt x - 1}
\right)^2 + \\
 + x^{9 / 2} + 12x^4 + 35x^3 + 35x^2 + 12x + \sqrt x + 8 \\
 \end{array}} \right)^2\notag\\
& \hspace{20pt}  - \left[ {7\sqrt {2x^2 + 2} \left( {x^{3 / 2} + 1} \right)\left( {x^2 + 1}
\right)\left( {\sqrt x + 1} \right)} \right]^2.\notag\\
& = \left( {\sqrt x - 1} \right)^4\left( {\begin{array}{l}
 186x^8 + 5\left( {x^7 + 1} \right)\left( {\sqrt x - 1} \right)^2 + 1346x^7
+ \\
 + 422x^{13 / 2} + 3614x^6 + 178x^{11 / 2} + 3185x^5 + \\
 + 1394x^{9 / 2} + 8022x^4 + 1394x^{7 / 2} + 3185x^3 + \\
 + 178x^{5 / 2} + 3614x^2 + 422x^{3 / 2} + 1346x + 186 \\
 \end{array}} \right).\notag
\end{align}

Since $v_{29} (x) \ge 0$, giving $u_{29} (x) \ge 0$, $\forall x > 0$, hence proving the required result.

\bigskip
\item \textbf{For }$\bf{\frac{6P_5 + 5G - 6H}{5} \le \frac{12P_5 + 7P_2 - 12P_3
}{7}}$\textbf{: }We have to show that
\[
\textstyle{1 \over {35}}\left( {18P_5 + 35P_2 + 42H - 60P_3 - 35G} \right)
\ge 0.
\]

We can write $18P_5 + 35P_2 + 42H - 60P_3 - 35G = b\,g_{30} \left( {a \mathord{\left/ {\vphantom {a b}} \right. \kern-\nulldelimiterspace} b} \right)$, where
\[
g_{30} (x) = \frac{u_{30} (x)}{\left( {\sqrt x + 1} \right)\left( {x^2 + 1}
\right)\left( {x^{3 / 2} + 1} \right)\left( {x + 1} \right)},
\]

\noindent where
\[
u_{30} (x) = \left( {\sqrt x - 1} \right)^4\left( {\begin{array}{l}
 18x^4 + 19x^{7 / 2} + 64x^3 + 124x^{5 / 2} + \\
 + 176x^2 + 124x^{3 / 2} + 64x + 19\sqrt x + 18 \\
 \end{array}} \right).
\]

Since $u_{30} (x) \ge 0,\,\forall x > 0$, this gives $u_{30} (x) \ge 0,\,\forall x > 0$, thereby proving the required result.

\bigskip
\item \textbf{For }$\bf{2A + G - 2P_4 \le \frac{12P_5 + 7P_2 - 12P_3 }{7}}$\textbf{:
}We have to show that
\[
\textstyle{1 \over 7}\left( {12P_5 + 7P_2 + 14P_4 - 12P_3 - 14A - 7G}
\right) \ge 0.
\]

We can write $12P_5 + 7P_2 + 14P_4 - 12P_3 - 14A - 7G = b\,g_{31} \left( {a
\mathord{\left/ {\vphantom {a b}} \right. \kern-\nulldelimiterspace} b}
\right)$, where
\[
g_{31} (x) = \frac{u_{32} (x)}{\left( {\sqrt x + 1} \right)\left( {x^2 + 1}
\right)\left( {x^{3 / 2} + 1} \right)},
\]

\noindent where
\[
u_{31} (x) = \left( {\sqrt x - 1} \right)^4\left( {\begin{array}{l}
 2x^3 + 3\left( {x^2 + 1} \right)\left( {\sqrt x - 1} \right)^2 + \\
 + 16x^2 + 21x^{3 / 2} + 16x + 2 \\
 \end{array}} \right).
\]

Since $u_{31} (x) \ge 0,\,\forall x > 0$, this gives $u_{31} (x) \ge 0,\,\forall x > 0$, thereby proving the required result.

\bigskip
\item \textbf{For }$\bf{A + S - G \le 6S + P_2 - 6A}$\textbf{: }We have to show that
$5S + P_2 + G - 7A \ge 0$. We can write $5S + P_2 + G - 7A = \textstyle{1
\over 2}b\,g_{32} \left( {a \mathord{\left/ {\vphantom {a b}} \right.
\kern-\nulldelimiterspace} b} \right)$, where
\[
g_{32} (x) = \frac{u_{32} (x)}{\left( {x^2 + 1} \right)},
\]

\noindent with
\[
u_{32} (x) = 5\sqrt {2x^2 + 2} \left( {x^2 + 1} \right)
 - \left( {7x^3 - 2x^{5 / 2} + 5x^2 - 2\sqrt x + 5x + 7} \right).
\]

Now we shall show that $u_{32} (x) \ge 0$, $\forall x > 0$. Let us consider
\begin{align}
v_{32} (x) & = \left[ {5\sqrt {2x^2 + 2} \left( {x^2 + 1} \right)} \right]^2
 - \left( {7x^3 - 2x^{5 / 2} + 5x^2 + 5x - 2\sqrt x + 7} \right)^2\notag\\
& = \left( {\sqrt x - 1} \right)^4\left( {\begin{array}{l}
 x^4 + 32x^{7 / 2} + 48x^3 + 5x^{3 / 2}\left( {\sqrt x - 1} \right)^2 + \\
 + 19x^{5 / 2} + 19x^{3 / 2} + 48x + 32\sqrt x + 1 \\
 \end{array}} \right).\notag
\end{align}

Since $v_{32} (x) \ge 0$, giving $u_{32} (x) \ge 0$, $\forall x > 0$, hence proving the required result.

\bigskip
\item \textbf{For }$\bf{\frac{6P_5 + 5G - 6H}{5} \le 6S + P_2 - 6A}$\textbf{: }We have
to show that
\[
\textstyle{1 \over 5}\left( {30S + 5P_2 + 6H - 30A - 6P_5 - 5G} \right) \ge
0.
\]

We can write $30S + 5P_2 + 6H - 30A - 6P_5 - 5G = b\,g_{33} \left( {a \mathord{\left/ {\vphantom {a b}} \right. \kern-\nulldelimiterspace} b} \right)$, where
\[
g_{33} (x) = \frac{u_{33} (x)}{\left( {\sqrt x + 1} \right)^2\left( {x^2 +
1} \right)\left( {x + 1} \right)},
\]

\noindent with
\begin{align}
u_{33} (x) & = 15\sqrt {2x^2 + 2} \left( {x^2 + 1} \right)\left( {\sqrt x + 1}
\right)^2\left( {x + 1} \right)\notag\\
& \hspace{20pt}  - \left( {\begin{array}{l}
 21x^5 + 35x^{9 / 2} + 56x^4 + 36x^{7 / 2} + 67x^3 + \\
 + 50x^{5 / 2} + 67x^2 + 36x^{3 / 2} + 56x + 35\sqrt x + 21 \\
 \end{array}} \right).\notag
\end{align}

Now we shall show that $u_{33} (x) \ge 0$, $\forall x > 0$. Let us consider
\begin{align}
v_{33} (x) &= \left[ {15\sqrt {2x^2 + 2} \left( {x^2 + 1} \right)\left(
{\sqrt x + 1} \right)^2\left( {x + 1} \right)} \right]^2\notag\\
& \hspace{20pt}  - \left( {\begin{array}{l}
 21x^5 + 35x^{9 / 2} + 56x^4 + 36x^{7 / 2} + 67x^3 + \\
 + 50x^{5 / 2} + 67x^2 + 36x^{3 / 2} + 56x + 35\sqrt x + 21 \\
 \end{array}} \right)^2\notag\\
& = \left( {\sqrt x - 1} \right)^4\left( {\begin{array}{l}
 9x^8 + 366x^{15 / 2} + 1433x^7 + 3540x^{13 / 2} + 6197x^6 + \\
 + 8892x^{11 / 2} + 10399x^5 + 10866x^{9 / 2} + 10676x^4 + \\
 + 10866x^{7 / 2} + 10399x^3 + 8892x^{5 / 2} + 6197x^2 + \\
 + 3540x^{3 / 2} + 1433x + 366\sqrt x + 9 \\
 \end{array}} \right).\notag
\end{align}

Since $v_{33} (x) \ge 0$, giving $u_{33} (x) \ge 0$, $\forall x > 0$, hence proving the required result.

\bigskip
\item \textbf{For }$\bf{2A + G - 2P_4 \le 6S + P_2 - 6A}$\textbf{: }We have to show that
\[
6S + P_2 + 2P_4 - G - 8A \ge 0.
\]

We can write $6S + P_2 + 2P_4 - G - 8A = b\,g_{34} \left( {a \mathord{\left/ {\vphantom {a b}} \right. \kern-\nulldelimiterspace} b} \right)$, where
\[
g_{34} (x) = \frac{u_{34} (x)}{\left( {x^2 + 1} \right)\left( {\sqrt x + 1}
\right)^2},
\]

\noindent with
\begin{align}
u_{34} (x) & = 3\sqrt {2x^2 + 2} \left( {x^2 + 1} \right)\left( {\sqrt x + 1}
\right)^2\notag\\
& \hspace{20pt}  - \left( {\begin{array}{l}
 4x^4 + 9x^{7 / 2} + x^3 + 7x^{5 / 2} + \\
 + 6x^2 + 7x^{3 / 2} + x + 9\sqrt x + 4 \\
 \end{array}} \right).\notag
\end{align}

Now we shall show that $u_{34} (x) \ge 0$, $\forall x > 0$. Let us consider
\begin{align}
v_{34} (x) & = \left[ {3\sqrt {2x^2 + 2} \left( {x^2 + 1} \right)\left( {\sqrt
x + 1} \right)^2} \right]^2\notag\\
& \hspace{20pt}  - \left( {\begin{array}{l}
 4x^4 + 9x^{7 / 2} + x^3 + 7x^{5 / 2} + \\
 + 6x^2 + 7x^{3 / 2} + x + 9\sqrt x + 4 \\
 \end{array}} \right)^2\notag\\
& = \left( {\sqrt x - 1} \right)^4\left( {\begin{array}{l}
 2x^6 + 8x^{11 / 2} + 39x^5 + 114x^{9 / 2} + \\
 + 149x^4 + 98x^{7 / 2} + 44x^3 + 98x^{5 / 2} + \\
 + 149x^2 + 114x^{3 / 2} + 39x + 8\sqrt x + 2 \\
 \end{array}} \right).\notag
\end{align}

Since $v_{34} (x) \ge 0$, giving $u_{34} (x) \ge 0$, $\forall x > 0$, hence proving the required result.

\bigskip
\item \textbf{For }$\bf{6S + P_2 - 6A \le \frac{16P_5 + 9P_1 - 16P_2 }{9}}$\textbf{:
}We have to show that
\[
\textstyle{1 \over 9}\left( {16P_5 + 9P_1 + 54A - 25P_2 - 54S} \right) \ge 0.
\]

We can write $16P_5 + 9P_1 + 54A - 25P_2 - 54S = b\,g_{35} \left( {a \mathord{\left/ {\vphantom {a b}} \right. \kern-\nulldelimiterspace} b} \right)$, where
\[
g_{35} (x) = \frac{u_{35} (x)}{\left( {\sqrt x + 1} \right)^2\left( {x^3 +
1} \right)\left( {x^2 + 1} \right)},
\]

\noindent with
\begin{align}
u_{35} (x) & = 27\sqrt {2x^2 + 2} \left( {x^2 + 1} \right)\left( {x^3 + 1}
\right)\left( {\sqrt x + 1} \right)^2\notag\\
& \hspace{20pt}  - \left( {\begin{array}{l}
 43x^7 + 54x^{13 / 2} + 70x^6 + 22x^{11 / 2} + 45x^5 + \\
 + 4x^{9 / 2} + 122x^4 + 144x^{7 / 2} + 122x^3 + 4x^{5 / 2} + \\
 + 45x^2 + 22x^{3 / 2} + 70x + 54\sqrt x + 43 \\
 \end{array}} \right).\notag
\end{align}

Now we shall show that $u_{35} (x) \ge 0$, $\forall x > 0$. Let us consider
\begin{align}
v_{35} (x) & = \left[ {27\sqrt {2x^2 + 2} \left( {x^2 + 1} \right)\left( {x^3
+ 1} \right)\left( {\sqrt x + 1} \right)^2} \right]^2\notag\\
& \hspace{20pt}  - \left( {\begin{array}{l}
 43x^7 + 54x^{13 / 2} + 70x^6 + 22x^{11 / 2} + 45x^5 + \\
 + 4x^{9 / 2} + 122x^4 + 144x^{7 / 2} + 122x^3 + 4x^{5 / 2} + \\
 + 45x^2 + 22x^{3 / 2} + 70x + 54\sqrt x + 43 \\
 \end{array}} \right)^2\notag\\
& = \left( {\sqrt x - 1} \right)^4\left( {\begin{array}{l}
 64x^{12} + 327\left( {x^{10} + 1} \right)\left( {x - 1} \right)^2 +
376x^{23 / 2} + 312x^{21 / 2} + \\
 + 11272x^{10} + 32320x^{19 / 2} + 50136x^9 + 51648x^{17 / 2} + \\
 + 42538x^8 + 38736x^{15 / 2} + 56978x^7 + 88128x^{13 / 2} + \\
 + 105160x^6 + 88128x^{11 / 2} + 56978x^5 + 38736x^{9 / 2} + \\
 + 42538x^4 + 51648x^{7 / 2} + 50136x^3 + 32320x^{5 / 2} + \\
 + 11272x^2 + 312x^{3 / 2} + 376\sqrt x + 64 \\
 \end{array}} \right).\notag
\end{align}

Since $v_{35} (x) \ge 0$, giving $u_{35} (x) \ge 0$, $\forall x > 0$, hence proving the required result.

\bigskip
\item \textbf{For }$\bf{\frac{12P_5 + 7P_2 - 12P_3 }{7} \le \frac{16P_5 + 9P_1 - 16P_2
}{9}}$\textbf{:} We have to show that
\[
\textstyle{1 \over {63}}\left( {4P_5 + 63P_1 + 108P_3 - 175P_2 } \right) \ge
0.
\]

We can write $4P_5 + 63P_1 + 108P_3 - 175P_2 = b\,g_{36} \left( {a \mathord{\left/ {\vphantom {a b}} \right. \kern-\nulldelimiterspace} b} \right)$, where
\[
g_{36} (x) = \frac{u_{36} (x)}{\left( {\sqrt x + 1} \right)\left( {x^2 + 1}
\right)\left( {x^{3 / 2} + 1} \right)\left( {x^3 + 1} \right)},
\]

\noindent where
\[
u_{36} (x) = \left( {\sqrt x - 1} \right)^4\left( {\begin{array}{l}
 4x^6 + 12x^{11 / 2} + 32x^5 + 168x^{9 / 2} + \\
 + 473x^4 + 705x^{7 / 2} + 764x^3 + 705x^{5 / 2} + \\
 + 473x^2 + 168x^{3 / 2} + 32x + 12\sqrt x + 4 \\
 \end{array}} \right).
\]

Since $u_{36} (x) \ge 0,\,\forall x > 0$, this gives $u_{36} (x) \ge 0,\,\forall x > 0$, thereby proving the required result.

\bigskip
\item \textbf{For }$\bf{\frac{P_6 + 3P_1 }{4} \le N_1 }$\textbf{:} We have to show that
\[
\textstyle{1 \over 4}\left( {4N_1 - P_6 - 3P_1 } \right) \ge 0.
\]

We can write $4N_1 - P_6 - 3P_1 = b\,g_{37} \left( {a \mathord{\left/
{\vphantom {a b}} \right. \kern-\nulldelimiterspace} b} \right)$, where
\[
g_{37} (x) = \frac{\sqrt x \left( {\sqrt x - 1} \right)^2\left( {2x + 3\sqrt
x + 2} \right)}{\left( {x - 1} \right)^2 + x}.
\]

Since $g_{37} (x) \ge 0, \, \forall x > 0$, hence proving the required result.

\bigskip
\item \textbf{For }$\bf{N_1 \le \frac{P_6 + 3G}{4}}$\textbf{:} We have to show that
\[
\textstyle{1 \over 4}\left( {P_6 + 3G - 4N_1 } \right) \ge 0.
\]

We can write $P_6 + 3G - 4N_1 = b\,g_{38} \left( {a \mathord{\left/{\vphantom {a b}} \right. \kern-\nulldelimiterspace} b} \right)$, where
\[
g_{38} (x) = \frac{\sqrt x \left( {\sqrt x - 1} \right)^2}{x + 1}.
\]

Since $g_{38} (x) \ge 0, \, \forall x > 0$, hence proving the required result.

\bigskip
\item \textbf{For }$\bf{\frac{P_6 + 3G}{4} \le \frac{S + 5N_1 }{6}}$\textbf{:}
Equivalently, we have to show that
\[
\textstyle{1 \over 6}\left( {2S + 10N_1 - 3P_6 - 9G} \right) \ge 0.
\]

We can write $2S + 10N_1 - 3P_6 - 9G = \textstyle{1 \over 2}b\,g_{39} \left( {a \mathord{\left/ {\vphantom {a b}} \right. \kern-\nulldelimiterspace} b} \right)$, where
\[
g_{39} (x) = \frac{u_{39} (x)}{x + 1},
\]

\noindent with
\begin{align}
u_{39} (x) & = 2\sqrt {2x^2 + 2} \left( {x + 1} \right)\notag\\
& \hspace{20pt}  - \left( {x^2 + 3x^{3 / 2} + 5\sqrt x \left( {\sqrt x - 1} \right)^2 +
3\sqrt x + 1} \right).\notag
\end{align}

Now we shall show that $u_{39} (x) \ge 0$, $\forall x > 0$. Let us consider
\begin{align}
v_{39} (x) & = \left[ {2\sqrt {2x^2 + 2} \left( {x + 1} \right)} \right]^2\notag\\
& \hspace{20pt}  - \left( {x^2 + 3x^{3 / 2} + 5\sqrt x \left( {\sqrt x - 1} \right)^2 +
3\sqrt x + 1} \right)^2\notag\\
& = \left( {\sqrt x - 1} \right)^4\left[ {7x^2 + 11\sqrt x \left( {\sqrt x -
1} \right)^2 + \sqrt x \left( {x + 1} \right) + 7} \right].\notag
\end{align}

Since $v_{39} (x) \ge 0$, giving $u_{39} (x) \ge 0$, $\forall x > 0$, hence proving the required result.

\bigskip
\item \textbf{For }$\bf{\frac{4P_5 + 9P_1 }{13} \le N_3 }$\textbf{:} We have to show that
\[
\textstyle{1 \over {13}}\left( {13N_3 - 4P_5 - 9P_1 } \right) \ge 0.
\]

We can write $13N_3 - 4P_5 - 9P_1 = \textstyle{1 \over 3}b\,g_{40} \left( {a
\mathord{\left/ {\vphantom {a b}} \right. \kern-\nulldelimiterspace} b}
\right)$, where
\[
g_{40} (x) = \frac{\left( {\sqrt x - 1} \right)^2\left( {\begin{array}{l}
 x^4 + 41x^{7 / 2} + 82x^3 + 108x^{5 / 2} + \\
 + 108x^2 + 108x^{3 / 2} + 82x + 41\sqrt x + 1 \\
 \end{array}} \right)}{\left( {\sqrt x + 1} \right)^2\left( {x^3 + 1}
\right)}.
\]

Since $g_{40} (x) \ge 0, \, \forall x > 0$, hence proving the required result.

\bigskip
\item \textbf{For }$\bf{\frac{P_5 + 5N_1 }{6} \le A}$\textbf{:} We have to show that
\[
\textstyle{1 \over 6}\left( {6A - P_5 - 5N_1 } \right) \ge 0.
\]
We can write $6A - P_5 - 5N_1 = \textstyle{1 \over 4}b\,g_{41} \left( {a
\mathord{\left/ {\vphantom {a b}} \right. \kern-\nulldelimiterspace} b}
\right)$, where
\[
g_{41} (x) = \frac{\left( {\sqrt x - 1} \right)^2\left( {\sqrt x + 3}
\right)\left( {3\sqrt x + 1} \right)}{\left( {\sqrt x + 1} \right)^2}.
\]

Since $g_{41} (x) \ge 0, \, \forall x > 0$, hence proving the required result.

\bigskip
\item \textbf{For }$\bf{\frac{2P_5 + 7P_2 }{9} \le N_1 }$\textbf{: }We have to show
that
\[
\textstyle{1 \over 9}\left( {9N_1 - 2P_5 - 7P_2 } \right) \ge 0.
\]

We can write $9N_1 - 2P_5 - 7P_2 = \textstyle{1 \over 4}b\,g_{42} \left( {a \mathord{\left/ {\vphantom {a b}} \right. \kern-\nulldelimiterspace} b} \right)$, where
\[
g_{42} (x) = \frac{\left( {\sqrt x - 1} \right)^2\left( {\begin{array}{l}
 x^3 + 38x^{5 / 2} + 85x^2 + \\
 + 112x^{3 / 2} + 85x + 38\sqrt x + 1 \\
 \end{array}} \right)}{\left( {\sqrt x + 1} \right)^2\left( {x^2 + 1}
\right)}.
\]

Since $g_{42} (x) \ge 0, \, \forall x > 0$, hence proving the required result.
\end{enumerate}
\end{proof}

\begin{remark} The results studied above in 1-42 parts can
equivalently be written as:
\begin{multicols}{2}
\begin{align}
& 1. \hspace{15pt} D_{P_6 S} \le \frac{1}{4}\left( {3D_{SP_4 } + 21D_{GP_4 } } \right).\notag\\
& 2. \hspace{15pt} D_{SN_2 } \le \frac{5}{3}D_{N_2 G}.\notag\\
& 3. \hspace{15pt} D_{P_6 N_2 } \le 3D_{N_2 G}.\notag\\
& 4. \hspace{15pt} D_{N_2 P_4 } \le \frac{1}{10}\left( {10D_{P_6 S} + D_{P_5 H} } \right).\notag\\
& 5. \hspace{15pt} D_{N_2 P_4 } \le \frac{1}{5}\left( {6D_{P_6 S} + D_{AN_2 } } \right).\notag\\
& 6. \hspace{15pt} D_{N_2 G} \le \frac{1}{20}\left( {10D_{P_5 N_2 } + 3D_{P_5 H} } \right).\notag\\
& 7. \hspace{15pt} D_{N_2 G} \le \frac{1}{4}\left( {2D_{P_5 N_2 } + D_{AP_4 } } \right).\notag\\
& 8. \hspace{15pt} D_{P_6 S} \le \frac{1}{2}D_{SP_4 }.\notag\\
& 9. \hspace{15pt} D_{P_6 N_1 } \le \frac{15}{14}D_{SP_4 }.\notag\\
& 10. \hspace{15pt} D_{P_6 S} \le \frac{1}{9}\left( {D_{SP_4 } + 8D_{N_1 P_4 } } \right).\notag\\
& 11. \hspace{15pt} D_{GP_4 } \le \frac{1}{2}D_{P_6 P_5}.\notag\\
& 12. \hspace{15pt} D_{AN_3 } \le \frac{1}{13}\left( {2D_{P_5 A} + 5D_{N_3 G} } \right).\notag\\
& 13. \hspace{15pt} D_{P_5 N_2 } \le 2D_{SN_3 } + 7D_{N_2 N_3 }.\notag\\
& 14. \hspace{15pt} D_{P_5 N_3 } \le \frac{5}{2}D_{SN_3 }.\notag
\end{align}

\begin{align}
& 15. \hspace{15pt} D_{SN_3 } \ge \frac{1}{20}\left( {7D_{N_3 H} + 3D_{P_5 H} } \right).\notag\\
& 16. \hspace{15pt} D_{AH} \le \frac{1}{45}\left( {36D_{P_6 N_3 } + 4D_{P_6 P_5 } } \right).\notag\\
& 17. \hspace{15pt} D_{P_6 A} \le \frac{9}{8}D_{AH}.\notag\\
& 18. \hspace{15pt} D_{P_6 H} \le \frac{1}{70}\left( {56D_{SP_4 } + 45D_{P_5 P_3 } } \right).\notag\\
& 19. \hspace{15pt} D_{P_6 N_2 } \le \frac{1}{42}\left( {27D_{P_5 P_3 } + 7D_{P_5 G} } \right).\notag\\
& 20. \hspace{15pt} D_{N_2 G} \le D_{P_5 N_2}.\notag\\
& 21. \hspace{15pt} D_{P_6 N_3 } \ge \frac{1}{18}\left( {14D_{N_2 H} + 7D_{P_5 H} } \right).\notag\\
& 22. \hspace{15pt} D_{P_5 P_1 } \ge \frac{1}{60}\left( {65D_{HP_2 } + 52D_{SP_4 } } \right).\notag\\
& 23. \hspace{15pt} D_{SP_4 } \le \frac{25}{16}D_{SG}.\notag\\
& 24. \hspace{15pt} D_{P_5 P_1 } \le \frac{1}{84}\left( {91D_{HP_2 } + 78D_{P_6 N_3 } } \right).\notag\\
& 25. \hspace{15pt} D_{P_6 N_3 } \le \frac{35}{24}D_{SG}.\notag\\
& 26. \hspace{15pt} D_{P_6 N_3 } \le \frac{7}{6}D_{SH}.\notag\\
& 27. \hspace{15pt} D_{N_1 G} \le \frac{1}{7}\left( {4D_{P_5 N_1 } + 3D_{P_6 N_1 } } \right).\notag\\
& 28. \hspace{15pt} D_{N_1 G} \ge \frac{1}{9}\left( {D_{P_6 N_1 } + 8D_{N_2 N_1 } } \right).\notag
\end{align}
\begin{align}
& 29. \hspace{15pt} D_{P_5 P_3 } \ge \frac{7}{12}\left( {D_{AP_2 } + D_{SG} } \right).\notag\\
& 30. \hspace{15pt} D_{GP_2 } \le \frac{1}{35}\left( {18D_{P_5 P_3 } + 42D_{HP_3 } } \right).\notag\\
& 31. \hspace{15pt} D_{P_5 P_3 } \ge \frac{1}{12}\left( {7D_{GP_2 } + 14D_{AP_4 } } \right).\notag\\
& 32. \hspace{15pt} D_{AG} + D_{A P_2}\le 5D_{SA}.\notag\\
& 33. \hspace{15pt} D_{SA} \ge \frac{1}{30}\left( {5D_{GP_2 } + 6D_{P_5 H} } \right).\notag\\
& 34 \hspace{15pt} D_{SA} \ge \frac{1}{6}\left( {2D_{AP_4 } + D_{GP_2 } } \right).\notag\\
& 35. \hspace{15pt} D_{P_5 P_2 } \ge \frac{1}{16}\left( {9D_{P_2 P_1 } + 54D_{SA} } \right).\notag
\end{align}
\begin{align}
& 36. \hspace{15pt} D_{P_2 P_1 } \le \frac{1}{63}\left( {4D_{P_5 P_2 } + 108D_{P_3 P_2 } } \right).\notag\\
& 37. \hspace{15pt} D_{P_6 N_1 } \le 3D_{N_1 P_1 }.\notag\\
& 38. \hspace{15pt} D_{N_1 G} \le \frac{1}{3}D_{P_6 N_1 }.\notag\\
& 39. \hspace{15pt} D_{P_6 N_1 } \le \frac{1}{3}\left( {2D_{SG} + 7D_{N_1 G} } \right).\notag\\
& 40. \hspace{15pt} D_{P_5 N_3 } \le \frac{9}{4}D_{N_3 P_1 }.\notag\\
& 41. \hspace{15pt} D_{P_5 A} \le 5D_{AN_1 }.\notag\\
& 42. \hspace{15pt} D_{P_5 N_1 } \le \frac{7}{2}D_{N_1 P_2 }.\notag
\end{align}
\end{multicols}
\end{remark}

\section{Connections with Divergence Measures}

Let
\[
\Gamma _n = \left\{ {P = (p_1 ,p_2 ,...,p_n )\left| {p_i > 0,\sum\limits_{i
= 1}^n {p_i = 1} } \right.} \right\},
\quad
n \ge 2,
\]

\noindent be the set of all complete finite discrete probability distributions. For
all $P,Q \in \Gamma _n $, the author \cite{tan2, tan3} proved the following inequalities:
\begin{align}
& \frac{1}{2}D_{AH} (P\vert \vert Q) \le I(P\vert \vert Q) \le 4D_{N_2 N_1 }
(P\vert \vert Q) \le \frac{4}{3}D_{N_2 G} (P\vert \vert Q)\notag\\
\label{eq16}
& \hspace{20pt}  \le D_{AG} (P\vert \vert Q) \le 4\,D_{AN_2 } (P\vert \vert Q) \le
\frac{1}{8}J(P\vert \vert Q) \le T(P\vert \vert Q),
\end{align}

\noindent where
\begin{align}
I(P\vert \vert Q) & = \frac{1}{2}\left[ {\sum\limits_{i = 1}^n {p_i \ln \left(
{\frac{2p_i }{p_i + q_i }} \right) + } \sum\limits_{i = 1}^n {q_i \ln \left(
{\frac{2q_i }{p_i + q_i }} \right)} } \right],\notag\\
J(P\vert \vert Q) & = \sum\limits_{i = 1}^n {(p_i - q_i )\ln \left( {\frac{p_i
}{q_i }} \right)}\notag
\intertext{and}
T(P\vert \vert Q)& = \sum\limits_{i = 1}^n {\left( {\frac{p_i + q_i }{2}}
\right)\ln \left( {\frac{p_i + q_i }{2\sqrt {p_i q_i } }} \right)} .\notag
\end{align}

The measures $I(P\vert \vert Q)$, $J(P\vert \vert Q)$ and $T(P\vert \vert
Q)$ are the respectively, the well-know \textit{Jensen-Shannon divergence},
$J - $\textit{divergence }and \textit{arithmetic and geometric mean divergence}. Moreover, $D_{AH}
(P\vert \vert Q) = \textstyle{1 \over 2}\Delta (P\vert \vert Q)$ and $D_{AG}
(P\vert \vert Q) = h(P\vert \vert Q)$, where $\Delta (P\vert \vert Q)$ and
$h(P\vert \vert Q)$ are the well-known \textit{triangular's} and \textit{Hellinger's} \textit{discriminations} respectively. These three measures satisfy the following equality:
\begin{equation}
\label{eq17}
J(P\vert \vert Q) = 4\left[ {I(P\vert \vert Q) + T(P\vert \vert Q)}
\right].
\end{equation}

Instead of $D_{( \cdot )} (a,b)$ as given in Section 1, here we are working with the probability distributions $P$ and $Q$, and writing $D_{( \cdot )} (a,b)$as $D_{( \cdot )} (P\vert \vert Q)$. Recently, author \cite{tan5} proved the following inequalities:
\begin{itemize}
\item [(i)] $\textstyle{2 \over 3}D_{AP_4 } \le I$;
\item [(ii)] $\textstyle{2 \over 3}D_{P_5 G} \le \textstyle{1 \over 8}J$;
\item [(iii)] $D_{P_5 A} \le \textstyle{1 \over 2}T.$
\end{itemize}

\begin{theorem} The following inequalities hold:
\begin{equation}
\label{eq18}
N_3 \le \frac{I + 4N_1 }{4} \le N_2 \le A \le \frac{2P_4 + 3I}{2} \le
\left\{ {\begin{array}{l}
 P_5 \le \frac{T + 2A}{2} \le \frac{3J + 16G}{16} \\
 S \\
 \end{array}} \right..
\end{equation}
\end{theorem}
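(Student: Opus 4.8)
The plan is to walk along the chain (\ref{eq18}) one link at a time, in each link first moving the single divergence quantity ($I$, $J$ or $T$) to one side and reading off what remains. Writing $D_{XY}=X-Y$ as in the text, the eight links are equivalent to
\begin{align*}
& N_3\le\tfrac{I+4N_1}{4}\iff 4D_{N_3N_1}\le I,\qquad \tfrac{I+4N_1}{4}\le N_2\iff I\le 4D_{N_2N_1},\\
& N_2\le A,\qquad A\le\tfrac{2P_4+3I}{2}\iff \tfrac23 D_{AP_4}\le I,\\
& \tfrac{2P_4+3I}{2}\le P_5\iff 3I\le 2D_{P_5P_4},\qquad \tfrac{2P_4+3I}{2}\le S\iff 3I\le 2D_{SP_4},\\
& P_5\le\tfrac{T+2A}{2}\iff D_{P_5A}\le\tfrac12 T,\qquad \tfrac{T+2A}{2}\le\tfrac{3J+16G}{16}.
\end{align*}
Four of these cost nothing: $I\le 4D_{N_2N_1}$ is the second inequality of (\ref{eq16}); $N_2\le A$ is a link of (\ref{eq8}); and $\tfrac23 D_{AP_4}\le I$ and $D_{P_5A}\le\tfrac12 T$ are exactly items (i) and (iii) quoted from \cite{tan5}.

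The decisive point is that three more links collapse once one records two identities special to these means. From $N_1=\tfrac{A+G}{2}$ and $N_3=\tfrac{2A+G}{3}$ one gets $N_3-N_1=\tfrac16(A-G)$, while $P_5=\tfrac{2A^2}{A+G}$ and $P_4=\tfrac{2G^2}{A+G}$ (both from $(\sqrt a+\sqrt b)^2=2(A+G)$) give $P_5-P_4=2(A-G)$. Consequently the link $4D_{N_3N_1}\le I$ becomes $\tfrac23(A-G)\le I$, which holds since $\tfrac23(A-G)\le\tfrac23(A-P_4)=\tfrac23 D_{AP_4}\le I$, the first step being $P_4\le G$ from (\ref{eq8}) and the last being (i). The link $3I\le 2D_{P_5P_4}=4(A-G)$ follows from $I\le 4D_{N_2N_1}$ once one notes $3D_{N_2N_1}\le 4D_{N_2N_1}\le D_{AG}=A-G$, again a segment of (\ref{eq16}). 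Finally, substituting $J=4(I+T)$ from (\ref{eq17}) turns the last link $\tfrac{T+2A}{2}\le\tfrac{3J+16G}{16}$ into $4D_{AG}\le 3I+T$, and the lower bounds (i) and (iii) close it through $3I+T\ge 2D_{AP_4}+2D_{P_5A}=2D_{P_5P_4}=4(A-G)=4D_{AG}$, with equality in the mean step.

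The only link that is genuinely new is $\tfrac{2P_4+3I}{2}\le S$, i.e. $3I\le 2D_{SP_4}$. Bounding $I\le 4D_{N_2N_1}$ from (\ref{eq16}) reduces it to the pure mean inequality $6(N_2-N_1)\le S-P_4$, and here the identities no longer help, because $S=\sqrt{2A^2-G^2}$ and $N_2=\sqrt{A(A+G)/2}$ carry surds that do not cancel against those of $P_4$ and $N_1$. I would dispatch it exactly as in Theorem 2.1: set $x=a/b$, write the difference as $\tfrac12 b\,g(x)$ with $g(x)=u(x)/(\sqrt x+1)^2$, and prove $u(x)\ge 0$ by forming $v(x)=(\text{rational part})^2-(\text{surd part})^2$. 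As in Part 13 of that proof two independent surds $\sqrt{x^2+1}$ and $\sqrt{x+1}$ are present, so I expect to square twice and to certify positivity of the resulting high‑degree polynomial $h(t)=v(t^2)$ by checking that it has no positive real root and that $h(1)>0$. This single surd inequality is where essentially all the work lies; every other link is bookkeeping on top of (\ref{eq8}), (\ref{eq16}), (\ref{eq17}) and the three facts (i)--(iii) of \cite{tan5}, and the tail of the chain merely coincides with the lower branch already displayed in (\ref{eq15}).
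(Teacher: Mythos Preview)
Your proposal is correct and follows essentially the same route as the paper: the same four links are taken for free from (\ref{eq8}), (\ref{eq16}) and (i)--(iii), the same two identities $D_{N_3N_1}=\tfrac16 D_{AG}$ and $D_{P_5P_4}=2D_{AG}$ handle the $N_3$-- and $P_5$--links, and the single nontrivial link $3I\le 2D_{SP_4}$ is reduced via $I\le 4D_{N_2N_1}$ to the mean inequality $6D_{N_2N_1}\le D_{SP_4}$ and finished by the double-squaring technique. Two small remarks: for the last displayed link the paper uses $D_{AG}\le I$ and $\tfrac18 J\le T$ from (\ref{eq16}) rather than your (i)+(iii) combination, and in the $S$--link the final polynomial factors as $(\sqrt x-1)^2$ times a polynomial with all positive coefficients, so your anticipated ``no positive root, $h(1)>0$'' check is not actually needed.
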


\begin{proof} In view of (i)-(iii), we shall prove only the necessary parts.

\begin{enumerate}
\item \textbf{For }$\bf{N_3 \le \frac{I + 4N_1 }{4}}$\textbf{:} We can write it as
$4D_{N_3 N_1 } \le I$. Since $D_{N_3 N_1 } = \textstyle{1 \over 6}D_{AG} $,
we have to show that $\textstyle{2 \over 3}D_{AG} \le I$. According to
(\ref{eq8}), $D_{AG} \le D_{AP_4 } $. This together with (i) proves the requires
result.

\item \textbf{For }$\bf{\frac{2P_4 + 3I}{2} \le P_5 }$\textbf{:} We can write it as $I
\le \textstyle{2 \over 3}D_{P_5 P_4 } $. Since, $D_{P_5 P_4 } = 2D_{AG} $
and $I \le 4D_{N_2 N_1 } \le D_{AG} $. This implies that $I \le D_{AG} =
\textstyle{1 \over 2}D_{P_5 P_4 } \le \textstyle{2 \over 3}D_{P_5 P_4 } $.
This gives the requires result.

\item \textbf{For }$\bf{\frac{T + 2A}{2} \le \frac{3J + 16G}{16}}$\textbf{:} We have to
show that $\textstyle{3 \over {16}}J + G - \textstyle{1 \over 2}T - A \ge
0$, i.e., $A - G \le \textstyle{3 \over {16}}J - \textstyle{1 \over 2}T$.
Since $J = 4\left( {I + T} \right)$, $D_{AG} \le I$ and $\textstyle{1 \over
8}J \le T$, combining we get the required result.

\item \textbf{For }$\bf{\frac{2P_4 + 3I}{2} \le S}$\textbf{: }We can write $I \le
\textstyle{2 \over 3}D_{SP_4 } $. We know that $I \le 4D_{N_2 N_1 } $. We
shall prove that $I \le 4D_{N_2 N_1 } \le \textstyle{2 \over 3}D_{SP_4 } $.
In order to show this we need to show that $4D_{N_2 N_1 } \le \textstyle{2
\over 3}D_{SP_4 } $, i.e.,
\[
\textstyle{1 \over 3}\left( {2S + 12N_1 - 2P_4 - 12N_2 } \right) \ge 0.
\]

We can write $2S + 12N_1 - 2P_4 - 12N_2 = \sum\limits_{i = 1}^n {q_i }
\,g_{43} \left( {{p_i } \mathord{\left/ {\vphantom {{p_i } {q_i }}} \right.
\kern-\nulldelimiterspace} {q_i }} \right)$, where
\[
g_{43} (x) = \frac{u_{43} (x)}{\left( {\sqrt x + 1} \right)^2},
\]

\noindent with
\begin{align}
u_{43} (x) & = 3x^2 + 12x^{(3 / 2)} + 10x + 12\sqrt x + 3 +\notag\\
& \hspace{20pt} + \sqrt {2x^2 + 2} \left( {\sqrt x + 1} \right)^2 - 3\sqrt {2x + 2} \left(
{\sqrt x + 1} \right)^3.\notag
\end{align}

Now we shall show that $u_{43} (x) \ge 0$, $\forall x > 0$. We shall apply
twice the argument given in Part 1 of Theorem 2.1. Let us consider
\begin{align}
v_{43} (x) & = \left( {\begin{array}{l}
 3x^2 + 12x^{(3 / 2)} + 10x + 12\sqrt x + \\
 + 3 + \sqrt {2x^2 + 2} \left( {\sqrt x + 1} \right)^2 \\
 \end{array}} \right)^2\notag\\
&\hspace{20pt} - \left[ {3\sqrt {2x + 2} \left( {\sqrt x + 1} \right)^3} \right]^2\notag\\
& = 2\sqrt {2x^2 + 2} \left( {\begin{array}{l}
 3x^3 + 18x^{5 / 2} + 37x^2 + \\
 + 44x^{3 / 2} + 37x + 18\sqrt x + 3 \\
 \end{array}} \right)\notag\\
& \hspace{20pt} - \left( {\begin{array}{l}
 7x^4 + 28x^{7 / 2} + 72x^3 + 148x^{5 / 2} + \\
 + 130x^2 + 148x^{3 / 2} + 72x + 28\sqrt x + 7 \\
 \end{array}} \right).\notag
\end{align}

Let us consider again
\begin{align}
v_{43a} (x) & = \left[ {2\sqrt {2x^2 + 2} \left( {\begin{array}{l}
 3x^3 + 18x^{5 / 2} + 37x^2 + \\
 + 44x^{3 / 2} + 37x + 18\sqrt x + 3 \\
 \end{array}} \right)} \right]^2\notag\\
& \hspace{20pt} - \left( {\begin{array}{l}
 7x^4 + 28x^{7 / 2} + 72x^3 + 148x^{5 / 2} + \\
 + 130x^2 + 148x^{3 / 2} + 72x + 28\sqrt x + 7 \\
 \end{array}} \right)^2\notag\\
& = \left( {\sqrt x - 1} \right)^2\left( {\begin{array}{l}
 + 23x^7 + 518x^{13 / 2} + 3589x^6 + 13324x^{11 / 2} + \\
 + 33239x^5 + 60922x^{9 / 2} + 85773x^4 + \\
 + 96744x^{7 / 2} + 85773x^3 + 60922x^{5 / 2} + \\
 + 33239x^2 + 13324x^{3 / 2} + 3589x + 518\sqrt x + 23 \\
 \end{array}} \right).\notag
\end{align}

Thus the non-negativity of $v_{43a} (x)$ proves that $v_{43} (x) \ge 0$, $\forall
x > 0$, thereby proving that $u_{43} (x) \ge 0$, $\forall x > 0$. This
completes the proof of the result.
\end{enumerate}
\end{proof}

\begin{remark} As a consequence of above results we have the following
new inequalities:
\begin{itemize}
\item[(i)] $4D_{N_3 N_1 } \le I \le \frac{2}{3}\left\{ {\begin{array}{l}
 D_{P_5 P_4 } \\
 D_{SP_4 } \\
 \end{array}} \right.$;
\item[(ii)] $\frac{2}{5}D_{SP_4 } \le I \le \frac{2}{3}D_{SP_4 }$;
\item[(iii)] $\frac{2}{3}h \le I \le h$;
\item[(iv)] $I \le \frac{1}{8}J \le T \le \frac{1}{4}J$.
\end{itemize}
\end{remark}

\end{document}